\newtheorem{theorem1}{Theorem}
\newtheorem{ths1}{Theorem 1, Statement (i)}
\newtheorem{lemma}{Lemma}
\newtheorem{theorem}{Restriction}
\newtheorem{restriction}{Restriction}
\newtheorem{ths2}{Theorem 1, Statement (ii)}
\newenvironment{remark}[1][Remark]{\begin{trivlist}
\item[\hskip \labelsep {\bfseries #1}]}{\end{trivlist}}
\newenvironment{proof}[1][Proof]{\begin{trivlist}
\item[\hskip \labelsep {\bfseries #1}]}{\end{trivlist}}
\newenvironment{example}[1][Example]{\begin{trivlist}
\item[\hskip \labelsep {\bfseries #1}]}{\end{trivlist}}
\title{Half-Spectral Space-Time Covariance Models}
\author{Michael T. Horrell \& Michael L. Stein\\
\textit{University of Chicago}}
\begin{document}
\maketitle

\begin{abstract}
We develop two new classes of space-time Gaussian process models by specifying covariance functions using what we call a half-spectral representation.  The half-spectral representation of a covariance function, $K$, is a special case of standard spectral representations.  In addition to the introduction of two new model classes, we also develop desirable theoretical properties of certain half-spectral forms.  In particular, for a half-spectral model, $K$, we determine spatial and temporal mean-square differentiability properties of a Gaussian process governed by $K$, and we determine whether or not the spectral density of $K$ meets a regularity condition motivated by a screening effect analysis. We fit models we develop in this paper to a wind power dataset, and we show our models fit these data better than other separable and non-separable space-time models.

\smallskip

\noindent \textit{Keywords: Space-Time Processes, Spectral Density, Fourier Transform, Covariance Function}
\end{abstract}

\section{Introduction }
Continuous natural phenomena in space or space-time are often modeled statistically as Gaussian processes. Generally, a Gaussian process model requires specification of a mean structure and a covariance or dependence structure. Since covariance models must be positive definite, mathematically valid dependence structures can be quite difficult to define. Therefore, a primary focus in defining new Gaussian process models is on defining valid covariance functions.  A covariance function, $K$, can be defined explicitly or implicitly in a number of ways.  For example, for continuous $K(x)$, where $K(x) = \int \exp(i \xi^Tx) F(d\xi)$ with $x, \xi \in \mathbb{R}^{d+1}$, specifying $F$ or the spectral density, $f$ (assuming it exists), where $F(d\xi) = f(\xi)d\xi$, will represent and characterize dependence structure in a Gaussian process.  In this paper, we explore an alternative representation of $K$ we call the half-spectrum.  We show half-spectral models are easy to define naturally, and we show many half-spectral models have several nice modeling properties.  

For a stationary, continuous space-time covariance model $K(s,t)$ with $s \in \mathbb{R}^d$ and $t \in \mathbb{R}$ with integrable spectral density, $g(\lambda,\omega)$, where $\lambda \in \mathbb{R}^d$ and $\omega \in \mathbb{R}$, we can represent $K$ with $g$ via Fourier transform,
\begin{equation}
K(s,t) = \frac{1}{(2 \pi)^{d+1}} \int_\mathbb{R} \int_{\mathbb{R}^d} g(\lambda,\omega) \exp\left(i s^T \lambda + i t \omega\right) d\lambda ~d\omega. \label{eqstart}
\end{equation}
A half-spectral representation of $K$ can be obtained by performing one of the two integration steps in (\ref{eqstart}) with respect to either $\lambda$ or $\omega$. After integrating over either $\lambda$ or $\omega$, $K$ is represented by the Fourier transform of an expression relating distance and frequency pairs: ($s,\omega$) or ($\lambda, t$).  We therefore call covariance representations of this type either half-spectral models in time (if in terms of ($s,\omega$)) or space (if in terms of ($\lambda,t$)). In this paper, we study half-spectra in time of the form $f(\omega)\mathbb{C}(s\delta(\omega)) \exp(i \theta(\omega) \phi^T s)$ defined such that
\begin{equation}
K(s,t) = \int_\mathbb{R} f(\omega) \mathbb{C}(s \delta(\omega)) \exp(i \theta(\omega) \phi^T s) \exp(i t \omega) d\omega \label{hs}.
\end{equation}
Here $f$ is the spectral density of $K(0,t)$, the temporal covariance function at a single spatial site; $\mathbb{C}$ is any valid correlation function in $\mathbb{R}^d$ with a spectral density; and $\delta$ is an even positive function that determines the space-time interaction properties of the model. The odd, real function, $\theta(\cdot)$, and the unit vector, $\phi \in \mathbb{R}^d$, determine space-time asymmetry in $K$ by specifying a phase relationship.  By Theorem 1 in \cite{regmondata}, we know $K$ thus defined is a valid covariance model. 

Space-time asymmetry can in many cases be thought of in terms of movement of a process through time.  \cite{gneiting} first introduced the term fully-symmetric to describe covariance models that have the following property: $K(s,t) = K(-s,t) = K(s,-t) = K(-s,-t)$ for all $s \in \mathbb{R}^d$ and $t \in \mathbb{R}$. For models of physical processes, full-symmetry can be a strong assumption; thus, $\theta$ and $\phi$ are important model building tools.  For much of this paper, we primarily consider the space-time symmetric case with $\theta(\omega) = 0$; hence, we mainly study half-spectra of the form $f(\omega)\mathbb{C}(s\delta(\omega))$. However, as we show in Section 2, some important modeling properties of $K$ will hold, independent of $\theta$ or $\phi$. Restriction of our analysis to models defined by $f(\omega)\mathbb{C}(s\delta(\omega))$ should therefore not be seen as a simplification.

\cite{creshua} first used half-spectral representations in space to help define several non-separable, closed-form space-time covariance functions.  \cite{gneiting} used half-spectral models as a stepping stone to extend much of the work in \cite{creshua}.  \cite{gneiting} also focused on establishing a large and flexible class of closed-form covariance functions.  \cite{regmondata} showed half-spectra in time can be fit directly to certain data structures using a multivariate Whittle likelihood approach.    \cite{arxiv} also focus on the class given in (\ref{hs}) and produce model fitting and inference strategies. Given the general usefulness of half-spectral models, it is natural to push their development. We go a step further than these authors by developing several properties of certain half-spectral forms.

The fully-symmetric, half-spectral form we study, $f(\omega)\mathbb{C}(s\delta(\omega))$, is quite flexible and lends itself to new interpretations.  Purely temporal properties of these models are determined by $f$, and purely spatial properties are determined by the real integral $\int f(\omega) \mathbb{C}(s \delta(\omega))d\omega$.  This real integral can also be seen as a mixture of covariance functions with varying ranges or simply as an expectation. These models also serve as one fairly direct way to extend common purely temporal models to the spatio-temporal domain.  The space-time interaction properties of these models are determined by $\delta$.  When $\delta$ is constant, temporal and spatial marginal covariance functions can be specified completely independently.  In this setting, $K(s,t)$ is proportional to a product of its spatial and temporal marginal covariance functions: $K(s,t) \propto K(s,0) K(0,t)$.  A model with this property is called a separable model, and like \cite{creshua}, \cite{gneiting} and several others, we avoid separable models because they have been criticized as unnatural \citep{kyriakidis}.  Finally, if the spectrum of $\mathbb{C}$ is known, the full spectral representations of $f(\omega)\mathbb{C}(s\delta(\omega))$ or $f(\omega)\mathbb{C}(s\delta(\omega)) \exp(i \theta(\omega) \phi^T s)$ can be determined straightforwardly; thus, full spectral diagnostics and considerations can be easily applied to half-spectral models of this type.

We use the flexibilities of $f(\omega)\mathbb{C}(s\delta(\omega))$ and $f(\omega)\mathbb{C}(s\delta(\omega)) \exp(i \theta(\omega) \phi^T s)$ to build new classes of models with several desirable modeling properties.  \cite{steinbook} has established the tremendous importance of the degree of mean-square differentiability (or smoothness) of a process on predictions and assessing uncertainty in these predictions.  We therefore work to develop models that have flexible degrees of smoothness in space and time.  We also consider carefully the space-time interaction properties of the models we develop.  Again, we avoid space-time separability, but we additionally attempt to define models that have ``natural'' space-time interaction properties. There is less research on properties of space-time interaction in space-time models; hence, ``natural'' space-time interaction is difficult to define without a specific application in mind. However, \cite{steinscreen} provides one condition that we might want to assume in practice.  Specifically, \cite{steinscreen} argues for the following condition on the full spectrum, $g$:  
\begin{equation}
\lim_{\|(\lambda,\omega)\|\rightarrow \infty} \sup_{\|(u,v)\| < R} \left| \frac{g(\lambda + v,\omega + u)}{g(\lambda,\omega)} - 1 \right| = 0 \label{cond}
\end{equation}
for all finite $R$. The limit $\|(\lambda,\omega)\| \rightarrow \infty$ indicates any path to infinity may be taken.  In particular, either $\lambda$ or $\omega$ may be bounded or fixed as the other coordinate goes to infinity.  Essentially, the condition in (\ref{cond}) requires $g$ to flatten out in a relative sense at high values of $\lambda$ and/or $\omega$. 

\cite{steinscreen} justifies (\ref{cond}) by considering interpolation properties of models determined by $g$.  Specifically, \cite{steinscreen} considers asymptotic behavior of best linear unbiased interpolations using points near to and far from an interpolation site.  \cite{steinscreen} proves that if $g$ satisfies (\ref{cond}) and has further mild properties, then as the nearby points approach the interpolation site, interpolation using only the information in the nearby points becomes optimum in the sense that adding the information from the far points does not improve predictions.  This broad concept is called the screening effect \citep{chiles}. The theorems in \cite{steinscreen} are not completely general, but they suggest that a space or space-time model with a spectral representation that does not follow the condition in (\ref{cond})  may have undesirable interpolation properties.  We therefore use this condition to guide development of our classes of half-spectral models.  Hence, the models we develop here may have more natural interpolation properties and, by extension, more natural space-time interaction.

As further confirmation that the condition in (\ref{cond}) may describe natural models, a number of physically derived Gaussian process models will satisfy (\ref{cond}).  For example, models defined using stochastic partial differential equations will often meet (\ref{cond}) when white noise forcing terms lead to processes with integrable spectral densities.  In particular, Gaussian process solutions to stochastic versions of convection-diffusion equations will satisfy (\ref{cond}) \citep{spde}.  

Several classical  ``unnatural'' models do not meet the condition in (\ref{cond}).  Separable models in particular do not follow (\ref{cond}).  In one or more dimensions, the squared exponential covariance, $K(t) = \exp(-t^2)$ which produces analytic realizations is also excluded since its spectral density, $\sqrt{\pi} \exp(-\omega^2/4)$,  goes to zero too quickly to satisfy (\ref{cond}).  The triangular covariance function $K(t) = (1 - |t|)^+$ which has documented undesirable interpolation properties is excluded as well since its spectral density, $2(1- \cos(\omega))/\omega^2$, oscillates too violently to flatten out in the relative sense required by (\ref{cond}) \citep{steinbook}.

It should also be noted the condition in (\ref{cond}) is a more general condition than one posed in \cite{steinold}, wherein $g$ is required to be asymptotically proportional to a regularly varying function.  Like \cite{steinscreen}, \cite{steinold} uses the screening effect to guide generation of restrictions on $g$.  We use (\ref{cond}) over the condition in \cite{steinold} because it describes a more general class of functions.  Additional screening effect analysis can be found in \cite{steinnew}, wherein conditions showing when a screening effect will not hold are given.  \cite{steinnew} also supports use of models satisfying exactly the condition in (\ref{cond}).

Beyond half-spectral methods, there are a host of other techniques that can be used to produce valid non-separable space-time and other multidimensional covariance models.  \cite{blur} present a convolution based method to produce a flexible anisotropic class of models. \cite{ma} presents a mixture type method to combine marginal covariances into a nonseparable model.  \cite{kolovos} provide an overview of many methods including full spectral methods and methods requiring solution of stochastic partial differential equations.  Furthermore, \cite{kolovos} present a technique to produce models valid in many dimensions from one-dimensional models via weighted sums.  \cite{genton} approximates non-separable models using separable covariances.  \cite{fuentes} and \cite{steincov} use full spectral representations to define non-separable models.

In this paper, we advance the half-spectral covariance model representation in (\ref{hs}) and show that only specific forms of functions $\mathbb{C}$, $f$, $\delta$ and $\theta$ will lead to models satisfying the condition in (\ref{cond}).  In the next section of this paper, we show space-time models will satisfy (\ref{cond}) independent of the specification of $\theta$ as long as $\theta$ is locally bounded.  This conclusion allows our remaining analysis in this section to focus on fully-symmetric models.  We develop two restrictions on the set of functions, $\mathbb{C}$, $f$ and $\delta$ and prove that models violating these restrictions will not meet (\ref{cond}).  These restrictions are subsequently used in Section 3 to guide development of two new classes of half-spectral models. One class of models satisfies both restrictions presented in Section 2.  The second class satisfies only one of the restrictions in Section 2.  For both classes, we show how smoothness for a fully symmetric model in space and time may be changed by specification of components of $\mathbb{C}$, $f$ and $\delta$.  In Section 4, we fit examples from both classes of models to daily average wind speed data recorded at different spatial locations across Ireland.  We compare fits of these models to other models of similar form. We find models in the classes we develop to be easily adaptable to the data we consider, and we show they better capture space-time interaction in the Irish wind dataset compared to other separable and other non-separable models.

\section{Restrictions on half-spectral forms}

The half-spectral form in (\ref{hs}) in general does not meet the condition in (\ref{cond}).  Let $h$ be the spectral density of the spatial covariance function $\mathbb{C}$. The full spectral representation of $K$ is
\begin{equation}
K(s,t) = \int_\mathbb{R} \int_{\mathbb{R}^d} f(\omega) \delta(\omega)^{-d} h\left(\frac{\lambda-\theta(\omega) \phi}{\delta(\omega)}\right) \exp(i s^T \lambda + i t \omega ) d\lambda ~ d\omega. \label{fullSpec}
\end{equation} 
Let $g(\lambda,\omega)$ be the fully-symmetric spectrum, $g(\lambda,\omega) =  f(\omega) \delta(\omega)^{-d} h\left(\lambda/\delta(\omega)\right)$; therefore, we can write the general spectrum of $K$  as $g(\lambda-\theta(\omega)\phi,\omega)$.  To show when $g(\lambda,\omega)$ or $g(\lambda-\theta(\omega)\phi,\omega)$ satisfy (\ref{cond}), we first establish the result in Theorem 1 that states: $g(\lambda,\omega)$ satisfies (\ref{cond}) if and only if $g(\lambda-\theta(\omega)\phi,\omega)$ satisfies (\ref{cond}) as long as a mild regularity condition requiring $\theta(\omega)$ to be locally bounded is satisfied.  This result permits the theoretical focus of this section to center on fully-symmetric models without much reducing the generality of our results.  With Theorem 1 established, the remaining theory in this section gives necessary conditions and restrictions on a fully-symmetric model (defined only by $f$, $\mathbb{C}$ and $\delta$) that must be met in order for $g$ to satisfy (\ref{cond}).  Proofs of Theorem 1 and the Restrictions presented in this section are in the Appendix.

\begin{theorem1}
Let $g(\lambda,\omega)$ be the full spectrum of a space-time covariance function where $\lambda$ is the spatial wavenumber and $\omega$ is the temporal frequency. Let $\theta(\cdot)$ be an odd function, and let $\phi$ be a unit vector in $\mathbb{R}^d$. The following two statements hold:
\begin{itemize}
\item[(i)] Let $\theta(\cdot)$ be locally bounded.  The full spectrum $g(\lambda,\omega)$ satisfies (\ref{cond}) if and only if \newline $g(\lambda - \theta(\omega) \phi,\omega)$ satisfies (\ref{cond}).  
\item[(ii)] Let $g(\lambda,\omega)$ be strictly positive and let $g(\lambda,\omega)$ satisfy (\ref{cond}). If $g(\lambda - \theta(\omega) \phi,\omega)$ satisfies (\ref{cond}), then $\theta(\omega)$ must be locally bounded.
\end{itemize}
\label{THM1}
\end{theorem1}

\begin{remark}
Statement (i) is bolstered in a practical sense by statement (ii).  Model building strategies often start with a fully-symmetric model that may be tweaked to be not fully symmetric. If we accept (\ref{cond}) as a desirable condition, both the fully-symmetric and the not fully-symmetric versions of a model should satisfy (\ref{cond}). Statement (ii) therefore establishes that the regularity condition in statement (i) (the local boundedness of $\theta(\omega)$) is a natural condition.
\end{remark}

With Theorem 1 in mind, we focus on fully-symmetric half-spectra $f(\omega)\mathbb{C}(s\delta(\omega))$ with full spectra $g(\lambda,\omega)$.  For $g(\lambda,\omega)$ to meet the condition in (\ref{cond}), we show the spectrum $h$  and function $\delta$ must lead to a factorizable form: $h(\lambda/\delta(\omega)) = H(\lambda, \omega) \delta(\omega)^d/f(\omega) $, where $H$ satisfies (\ref{cond}).  Though $h$ and $\delta$ of this form can be seen to trivially meet (\ref{cond}), the restrictions developed in the following paragraphs establish certain more flexible forms will not satisfy (\ref{cond}). 

Since $g(\lambda,\omega) = f(\omega) \delta(\omega)^{-d} h\left(\lambda/\delta(\omega)\right)$, we can more simply write $g$ in terms of a marginal function in $\omega$ and a joint function in $\lambda$ and $\omega$. Let $g(\lambda,\omega) = p(\omega) q(\lambda,\omega)$.  The following restrictions establish necessary conditions on $p$ and $q$ that must be met in order for $g$ to satisfy the condition in (\ref{cond}).

\begin{theorem}
Let the spectral representation of covariance function $K$ have the form $g(\lambda,\omega) = p(\omega) q(\lambda,\omega)$.  The following two statements hold for spectral densities with this parameterization.
\begin{itemize}
\item[(i)] If $g$ and $q$ satisfy the condition in (\ref{cond}), then $p$ must be constant in $\omega$.
\item[(ii)] If $g$ satisfies the condition in (\ref{cond}), then for every point $\omega_0$ such that $p(\omega_0)$ is finite and positive, the marginal spectrum $g^*_{\omega_0}(\lambda) = g(\lambda,\omega_0)$ must meet the condition in (\ref{cond}).
\end{itemize}
\label{t1}
\end{theorem}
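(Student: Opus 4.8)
The plan is to exploit the multiplicative structure $g(\lambda,\omega)=p(\omega)q(\lambda,\omega)$ together with the feature of (\ref{cond}), emphasized in the paragraph following it, that the limit $\|(\lambda,\omega)\|\to\infty$ may be taken along \emph{any} path; in particular, $\omega$ may be held fixed while $\|\lambda\|\to\infty$. Throughout I would work in the region where $g$ and $q$ are strictly positive, which is forced for large argument by the hypothesis that they satisfy (\ref{cond}) (otherwise the ratios appearing in (\ref{cond}) are undefined), and I would treat $p(\omega)=g(\lambda,\omega)/q(\lambda,\omega)$ as the well-defined, finite, positive function determined by the factorization.

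For statement (i), the first step is to factor the defining ratio as
\[
\frac{g(\lambda+v,\omega+u)}{g(\lambda,\omega)}=\frac{p(\omega+u)}{p(\omega)}\cdot\frac{q(\lambda+v,\omega+u)}{q(\lambda,\omega)}.
\]
I would then fix $\omega$ and $u$, set $v=0$, and let $\|\lambda\|\to\infty$. Since this is an admissible path to infinity, the assumption that $g$ satisfies (\ref{cond}) forces the left-hand side to tend to $1$, while the assumption that $q$ satisfies (\ref{cond}) forces the second factor on the right to tend to $1$. Hence $p(\omega+u)/p(\omega)\to1$ along this path. But this ratio does not depend on $\lambda$, so it equals its own limit, giving $p(\omega+u)=p(\omega)$. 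As $\omega$ is arbitrary and $u$ ranges over $|u|<R$ with $R$ arbitrary, I would conclude $p$ is constant.

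For statement (ii), I would observe that the condition (\ref{cond}) for the purely spatial marginal $g^*_{\omega_0}$ is exactly the joint condition (\ref{cond}) specialized to the temporal perturbation $u=0$ and to the fixed slice $\omega=\omega_0$. Requiring $p(\omega_0)$ finite and positive guarantees that $g^*_{\omega_0}(\lambda)=p(\omega_0)q(\lambda,\omega_0)$ is a genuine positive function of $\lambda$ (a positive constant multiple of $q(\cdot,\omega_0)$, and such a scaling cancels in the ratio), so the marginal condition is well posed rather than a $0/0$ or $\infty/\infty$ expression. Restricting the supremum in (\ref{cond}) to the sub-collection $u=0$ only decreases it, so it still tends to $0$; and since holding $\omega=\omega_0$ fixed while $\|\lambda\|\to\infty$ is an admissible path, that specialized supremum tends to $0$ along it. This is precisely the assertion that $g^*_{\omega_0}$ satisfies (\ref{cond}).

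The routine algebra here is light; the step requiring the most care — and the place where the argument lives or dies — is the justification that $\omega$ may be frozen while $\|\lambda\|\to\infty$ and that all denominators stay bounded away from $0$ along such paths. This is exactly what lets me decouple the temporal factor $p$ from the joint factor $q$ in (i) and read off the spatial marginal in (ii). I would therefore state explicitly at the outset that (\ref{cond}) is assumed along every path to infinity (as stipulated in the text after (\ref{cond})) and that $g,q>0$ for large argument, so that each ratio above is defined and the cancellations are legitimate.
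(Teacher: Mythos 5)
Your proof is correct and follows essentially the same route as the paper's: both arguments hinge on freezing $\omega$ while $\|\lambda\|\to\infty$ (an admissible path under (\ref{cond})) and on using the factorization $g=pq$ to decouple $p$ from $q$ in (i), and on restricting the supremum to the $u=0$ slice at $\omega=\omega_0$ in (ii). The only difference is presentational: you argue directly (dividing limiting ratios to force $p(\omega+u)/p(\omega)=1$), whereas the paper proves the contrapositives by exhibiting a violating perturbation when $p$ is non-constant or the marginal condition fails.
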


\begin{remark}
The requirement in (i) that $q$ follows the condition in (\ref{cond}) may be weakened. Our proof for (i) needs only that $\lim q(\lambda, \omega_0+u_0)/q( \lambda, \omega_0) \neq 1/C$, a fact which follows from $q$ satisfying (\ref{cond}).  This condition however seems quite arbitrary, and we have opted to use the stricter condition.  
\end{remark}

The implication of Restriction \ref{t1} is simply that the space-time interaction function $\delta$ cannot be determined completely independently of $\mathbb{C}$ and $f$ if we wish to use models that follow the condition in (\ref{cond}).  Of note is the fact that this restriction applies to several models proposed in \cite{creshua}, \cite{gneiting} and \cite{regmondata}.

One may define the spatial interaction function as $\delta(\omega) = f(\omega)^{1/d}$ to escape the conditions of Restriction \ref{t1}; however, $\delta(\omega)$ defined in this way decreases to zero for increasingly large $\omega$ since $f$ must be integrable.  This behavior of $\delta$ contradicts intuition---for example, considering (\ref{hs}), we see $\delta(\omega) \rightarrow 0$ implies coherences between multiple time series are highest at high frequencies.  More formally, $\delta$ decreasing to zero or having a finite limit as $\omega$ approaches infinity will not generally lead to models satisfying the condition in (\ref{cond}).

\begin{theorem}
Let $g(\lambda,\omega) = p(\omega)q(\lambda/\delta(\omega))$, where $p(\omega)$ is positive and finite for all sufficiently large $|\omega|$, $q$ is a non-negative, continuous, integrable function that is not identically zero, and $\delta(\omega)$ is even with a well-defined limit as $|\omega| \rightarrow \infty$.  If $g$ satisfies (\ref{cond}),  then $\lim_{|\omega| \rightarrow \infty} \delta(\omega) = \infty$. \label{t2}
\end{theorem}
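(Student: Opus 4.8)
The plan is to argue by contradiction. Since $\delta$ is assumed to have a well-defined limit as $|\omega|\to\infty$, that limit lies in $[0,\infty]$, and I want to rule out every finite value. So suppose $L := \lim_{|\omega|\to\infty}\delta(\omega)$ satisfies $0 \le L < \infty$, and aim to show this forces $q$ to be a strictly positive constant on all of $\mathbb{R}^d$, contradicting the hypothesis that $q$ is non-negative, integrable, and not identically zero. As a first step, because $q \ge 0$ and is not identically zero, I fix a point $x_0 \in \mathbb{R}^d$ with $0 < q(x_0) < \infty$.

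The central idea is to exploit the scaling form $g(\lambda,\omega) = p(\omega)\,q\!\left(\lambda/\delta(\omega)\right)$ by choosing the path to infinity and the perturbation so that the argument of $q$ lands exactly on two prescribed points. Concretely, I would take the admissible path $\lambda = x_0\,\delta(\omega)$ with $\omega \to +\infty$: since $\delta(\omega)\to L < \infty$, the spatial coordinate stays bounded (it converges to $x_0 L$) while $\omega\to\infty$, so $\|(\lambda,\omega)\|\to\infty$. Along this path $\lambda/\delta(\omega) = x_0$ identically, so $g(\lambda,\omega) = p(\omega)\,q(x_0)$, which is positive and finite for all large $|\omega|$ by the hypotheses on $p$ and the choice of $x_0$. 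Then, for an arbitrary target $y \in \mathbb{R}^d$, I would apply the purely spatial perturbation $v(\omega) = (y - x_0)\,\delta(\omega)$ (with temporal perturbation $u=0$), chosen precisely so that $(\lambda + v(\omega))/\delta(\omega) = y$. The decisive point is that $\|v(\omega)\| = \|y-x_0\|\,\delta(\omega) \to \|y-x_0\|\,L$ remains bounded, so there is a finite $R$ with $\|v(\omega)\| < R$ for all large $\omega$, making this perturbation eligible for the supremum in (\ref{cond}).

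With these choices the ratio inside (\ref{cond}) collapses to $g(\lambda+v,\omega)/g(\lambda,\omega) = q(y)/q(x_0)$, which is independent of $\omega$. Since (\ref{cond}) holds along the chosen path and our specific perturbation lies in the ball of radius $R$ for large $\omega$, the single term $|q(y)/q(x_0)-1|$ is bounded by the supremum, which tends to $0$; as the left side is constant in $\omega$, it must equal $0$, giving $q(y) = q(x_0)$. Because $y$ was arbitrary, $q \equiv q(x_0) > 0$, and a positive constant is not integrable over $\mathbb{R}^d$ — the contradiction. Hence no finite $L$ is possible, and the well-defined limit must be $\lim_{|\omega|\to\infty}\delta(\omega) = \infty$. (Evenness of $\delta$ lets me work with $\omega\to+\infty$ without loss of generality.)

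I expect the main obstacle to be purely a matter of careful bookkeeping around the $\omega$-dependent perturbation: I must verify that choosing $R$ after fixing $y$ (and requiring $\|v(\omega)\| < R$ only eventually) is consistent with the ``for all finite $R$'' quantifier in (\ref{cond}), and that bounding the single term by the supremum is legitimate. It is worth flagging that this is exactly the step where finiteness of $L$ is used — if $\delta(\omega)\to\infty$ instead, then $\|v(\omega)\|\to\infty$ and the perturbation escapes every bounded ball, so the argument correctly fails to produce a contradiction. I would also remark that continuity of $q$ is not actually needed here, since the path is engineered to evaluate $q$ exactly at $x_0$ and $y$; only non-negativity, integrability, and non-triviality of $q$, together with positivity and finiteness of $p$ at large $|\omega|$, enter the proof.
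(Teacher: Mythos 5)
Your proof is correct, and it shares the paper's core construction --- the path $\lambda = x_0\,\delta(\omega)$ with $q(x_0)>0$ and $|\omega|\to\infty$, exploiting the fact that a finite limit of $\delta$ keeps the relevant spatial perturbation inside a fixed ball --- but it finishes differently. The paper (arguing by contraposition) uses a \emph{fixed} perturbation $v_1$, chosen via integrability so that $q(v_0+v_1/E)\neq q(v_0)$, and then invokes continuity of $q$ to show the supremum in (\ref{cond}) stays bounded away from zero, directly exhibiting a violation. You instead use an $\omega$-dependent perturbation $v(\omega)=(y-x_0)\delta(\omega)$ engineered so the argument of $q$ lands exactly on an arbitrary target $y$, making the ratio identically $q(y)/q(x_0)$; condition (\ref{cond}) then forces $q$ to be a positive constant, which contradicts integrability. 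Your variant buys two things: it needs no continuity of $q$ (as you note), and it handles the limit $L=0$ uniformly, whereas the paper's choice $v_1/E$ implicitly presumes $E>0$ and its continuity step degenerates in that case. What the paper's version buys in exchange is that its perturbation is fixed rather than $\omega$-dependent, so it sits inside the supremum of (\ref{cond}) verbatim with no ``eventually inside the ball of radius $R$'' bookkeeping; that bookkeeping is the one place in your argument that demands care, and you have handled it correctly --- $R$ may depend on $y$ since (\ref{cond}) is assumed for every finite $R$, and bounding a single admissible term by the supremum is legitimate.
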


\begin{remark}
A version of this Restriction exists concerning $\delta(\omega)$ without a well-defined limit as $|\omega| \rightarrow \infty$.  If  $\liminf_{|\omega|\rightarrow\infty} \delta(\omega)$ is finite, then under the additional condition that $q$ satisfies (\ref{cond}) it can be proved that $g$ cannot satisfy (\ref{cond}).  We let $q$ be more general and include the stronger condition on $\delta$ in this Restriction to match the class of models we develop in Section 3.2.
\end{remark}

Restrictions \ref{t1} and \ref{t2} confine the form of $\delta$.  Implicitly, the form of $h$ is also restricted. For half-spectral models satisfying (\ref{cond}), if $\delta$ must increase to infinity, $f(\omega) \delta(\omega)^{-d}$ cannot be constant, implying $h(\lambda/\delta(\omega))$ itself cannot satisfy (\ref{cond}).  On the other hand, for fixed $\omega$, $h(\lambda/\delta(\omega))$ must satisfy (\ref{cond}) in $\lambda$ by part (ii) of Restriction \ref{t1}.  One way forward is to choose $h$ and $\delta$ such that $h(\lambda/\delta(\omega))$ admits the representation, $h(\lambda/\delta(\omega)) =  H(\lambda,\omega) \delta(\omega)^d/f(\omega)$, where $H$ satisfies (\ref{cond}).  A half-spectral model thus specified will satisfy (\ref{cond}). Finally, by Theorem 1, with any locally bounded $\theta$, $h((\lambda - \theta(\omega)\phi)/\delta(\omega))$ specified in this way will also lead to models satisfying (\ref{cond}).

\section{New classes of models}

We build two new classes of fully-symmetric models; thus $\theta(\omega) = 0$ for each of the classes of models we present in this section.  If $\theta(\omega)$ is non-zero, models with space-time asymmetry can be defined.  Importantly, as we showed in Section 2, a zero or non-zero $\theta(\omega)$ does not affect whether or not a given model will satisfy (\ref{cond}) as long as $\theta$ is a locally bounded function.

\subsection{A class of models satisfying the natural condition}

Let $h(\lambda) = \phi (\alpha^2 + \|\lambda\|^2)^{-(\nu + d/2 + 1/2)}$, the Mat\'{e}rn spectral density with smoothness $\nu + 1/2 > 0$, $\alpha > 0$ an inverse range parameter and $\phi > 0$, a variance parameter.  For a given $f$, to ensure Restrictions \ref{t1} and \ref{t2} are not violated, choose $\delta(\omega) = f(\omega)^{-1/2 \frac{1}{\nu + 1/2}}$. Plugging into the space-time spectrum in (\ref{fullSpec}), a new set of spectral densities can be written
\begin{equation}
g(\lambda,\omega) = \phi\left(\alpha^2 f(\omega)^{-1/(\nu + 1/2)} + \|\lambda\|^2  \right)^{-(\nu+(d+1)/2)}. \label{hsf}
\end{equation}
If $f$, itself a spectral density, satisfies $f \sim c_f \omega^{-k}$ for some $k > 1$ and $c_f$ a positive constant depending on $f$, then (\ref{hsf}) describes a class of models that satisfies (\ref{cond}).  We write covariance functions with this half-spectrum $K_f$ to indicate the temporal spectrum $f$ determines a specific model in this class.  

We use a Mat\'{e}rn spectral density in space to form this class of models for three reasons.  First, it is a natural model to analyze since spatial modeling is often carried out using a Mat\'{e}rn covariance or a modification of a Mat\'{e}rn covariance.  Second, its form leads to easy comparison with the conditions of Restrictions \ref{t1} and \ref{t2}.  Third, the Mat\'{e}rn spectrum gives (\ref{hsf}) some desirable Mat\'{e}rn-like properties.  For example, as we show in the following paragraphs, the parameter $\nu$, is directly related to the spatial smoothness of the resulting Gaussian process.  The 1/2 added to $\nu$ ensures this formulation extends the space-time Mat\'{e}rn covariance naturally.  If $f(\omega)$ is a one-dimensional Mat\'{e}rn spectral density with smoothness $\nu$, then $K_f(s,t)$ is simply a $d + 1$ dimensional Mat\'{e}rn covariance function with smoothness $\nu$.  Note, in this case, $\nu > 0$, since $f$ must be integrable.

The half-spectral representation of (\ref{hsf}) is given simply by plugging the Mat\'{e}rn covariance into the relevant portion of (\ref{hs}).
\begin{eqnarray}
f(\omega) \mathbb{C}(s \delta(\omega)) &=&  f(\omega) \frac{\pi^{d/2} \phi}{2^{\nu - 1/2} \Gamma(\nu + (d+1)/2) \alpha^{2\nu + 1}} \times \nonumber \\
& & (\alpha \|s\|  f(\omega)^{-1/2 \frac{1}{\nu + 1/2}})^{v+1/2} \mathcal{K}_{\nu + 1/2}  \left(\alpha \|s\|  f(\omega)^{-1/2 \frac{1}{\nu + 1/2}}\right) \label{HSFORM1}
\end{eqnarray}
where $\mathcal{K}_{\nu+1/2}$ is the modified Bessel function of the second kind with argument $\nu+ 1/2$ \citep{nist}.  If we let $\phi$ absorb terms not depending on $\omega$ in (\ref{HSFORM1}) and write $\mathcal{M}_{\nu + 1/2}(s) = \|s\|^{\nu+1/2} \mathcal{K}_{\nu+1/2}(\|s\|)$, (\ref{HSFORM1}) can be simplified to 
\begin{equation}
f(\omega) \mathbb{C}(s \delta(\omega)) = \phi f(\omega) \mathcal{M}_{\nu+1/2} \left(\alpha \|s\|  f(\omega)^{-1/2 \frac{1}{\nu + 1/2}} \right). \label{HSFORM2}
\end{equation}

Beyond satisfying the condition in (\ref{cond}), mean-square differentiable properties of this class of models are readily found.  For a space-time process, $Z(s,t)$, governed by stationary covariance $K(s,t)$, $Z(s,t)$ is $m$ times mean-square differentiable in time if and only if $\frac{\partial^{2m}}{\partial t^{2m}}K(0,0)$ exists \citep{steincov}. Following \cite[p. 5]{steinweiss}, $k$-th derivatives of $K(0,0)$ in the temporal direction are known to exist if the purely temporal spectral density, $f(\omega)$, is integrable and $\int f(\omega) |\omega|^k < \infty$.  Hence, if $f$ integrable and $f \sim c_f |\omega|^{-k}$ as $\omega$ approaches infinity, then $Z(s,t)$ will be $m$ times mean-square differentiable in time if $(k-1)/2 > m$.  Similar statements hold regarding the mean-square differentiability of $Z(s,t)$ in space. To determine spatial mean-square differentiability, we mirror the argument in \cite[p. 313]{steincov}, showing $Z(s,t)$ will be $m$ time mean-square differentiable in space if and only if $(\nu + 1/2)\cdot (k - 1)/k > m$.  Therefore, with these models, smoothnesses in space and time can be defined via $\nu$ and the limiting behavior of $f$ determined by $k$.  We call $(\nu+1/2) \cdot (k-1)/k$ and $(k-1)/2$ the effective smoothnesses of the purely spatial and purely temporal processes respectively.  Since $\nu > -1/2$ and $k > 1$, arbitrary effective smoothnesses in space and time can be specified using models in this class.

\begin{example}{\bf 1 (Mat\'{e}rn in Time)}

Let $f(\omega) = (\beta^2+\omega^2)^{-(\kappa + 1/2)}$, a one dimensional Mat\'{e}rn spectral density.  To avoid overparameterization, the scale parameter is fixed at one.  Plugging $f$ into (\ref{hsf}) and (\ref{HSFORM2}), we get the spectral and half-spectral forms.
\begin{eqnarray}
g(\lambda,\omega) &=&  \phi\left(\alpha^2  (\beta^2+\omega^2)^{(\kappa + 1/2)/(\nu + 1/2)} + \lambda^2  \right)^{-(\nu+(d+1)/2)}; \nonumber \\
f(\omega) \mathbb{C}(s \delta(\omega))&=&   \phi \left( \beta^2+\omega^2\right)^{-(\kappa + 1/2)}\mathcal{M}_{\nu+1/2} \left(\alpha \|s\|   (\beta^2+\omega^2)^{1/2\frac{\kappa + 1/2}{\nu + 1/2}}\right). \label{mit}
\end{eqnarray}
Here $k = 2 \kappa + 1$; hence this process is $\lceil \kappa \frac{\nu + 1/2}{\kappa + 1/2} - 1\rceil$ times mean-square differentiable in space and $\lceil \kappa-1 \rceil$ times mean-square differentiable in time, where $\lceil \cdot \rceil$ is the ceiling function. Again, it is readily checked that if $\kappa = \nu$, $K_f$ reduces to a $d+1$ dimensional Mat\'{e}rn model with perhaps different range parameters in time and space, depending on $\beta$ and $\alpha$. A numerically calculated contour plot for $\nu = 1/8$, $\kappa = 2$ is in Figure \ref{MTcovs}(a).  Marginal plots of $K_f(0,s)$ and $K_f(t,0)$ are in Figure \ref{MTcovs}(b).
\begin{figure}
\centering
(a)
\begin{minipage}[t]{0.45\linewidth}
\centering
\includegraphics[scale=.295]{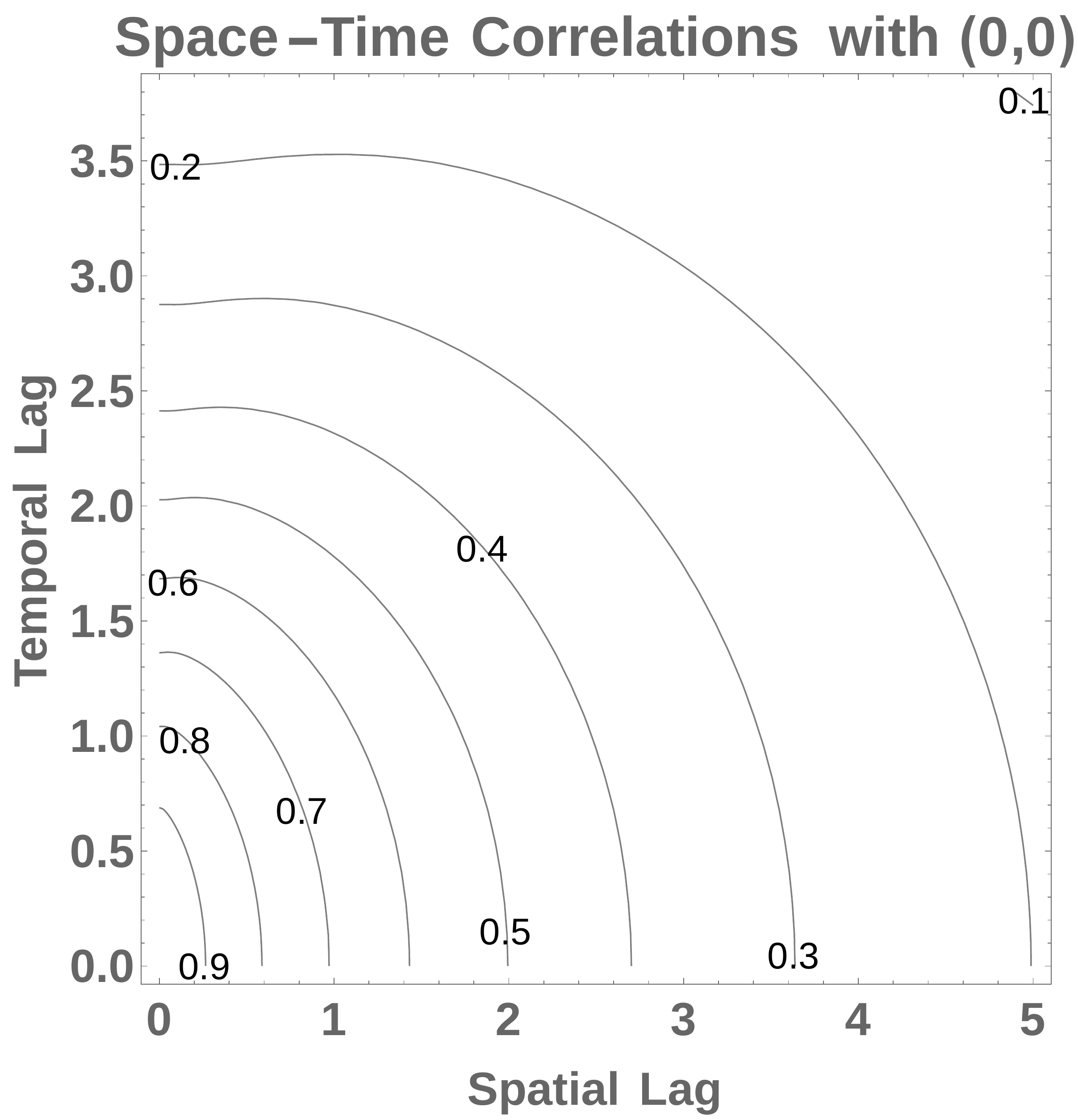}
\end{minipage}
(b)
\begin{minipage}[b]{0.45\linewidth}
\centering
\includegraphics[scale=.27]{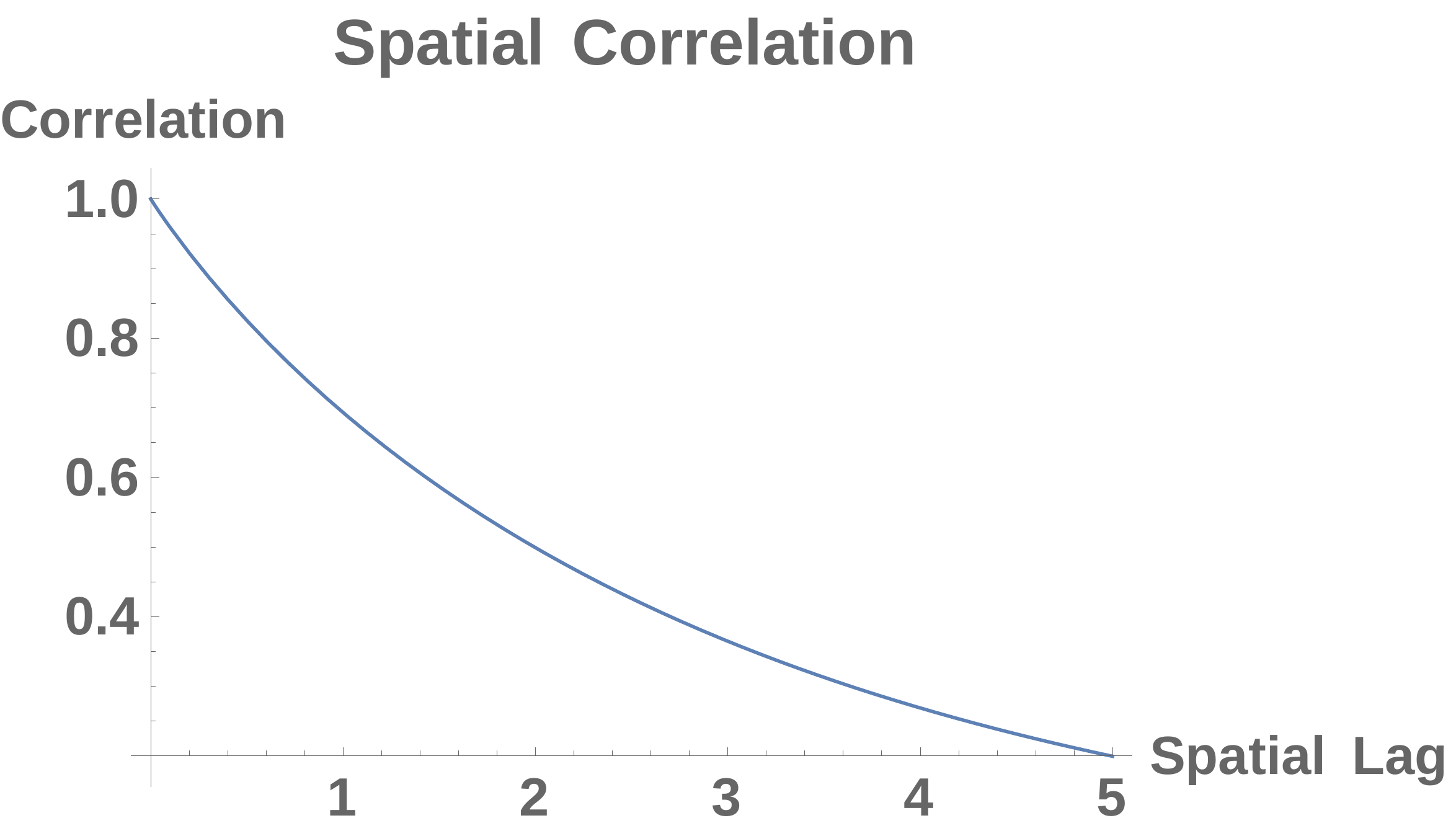}  \\ \includegraphics[scale=.27]{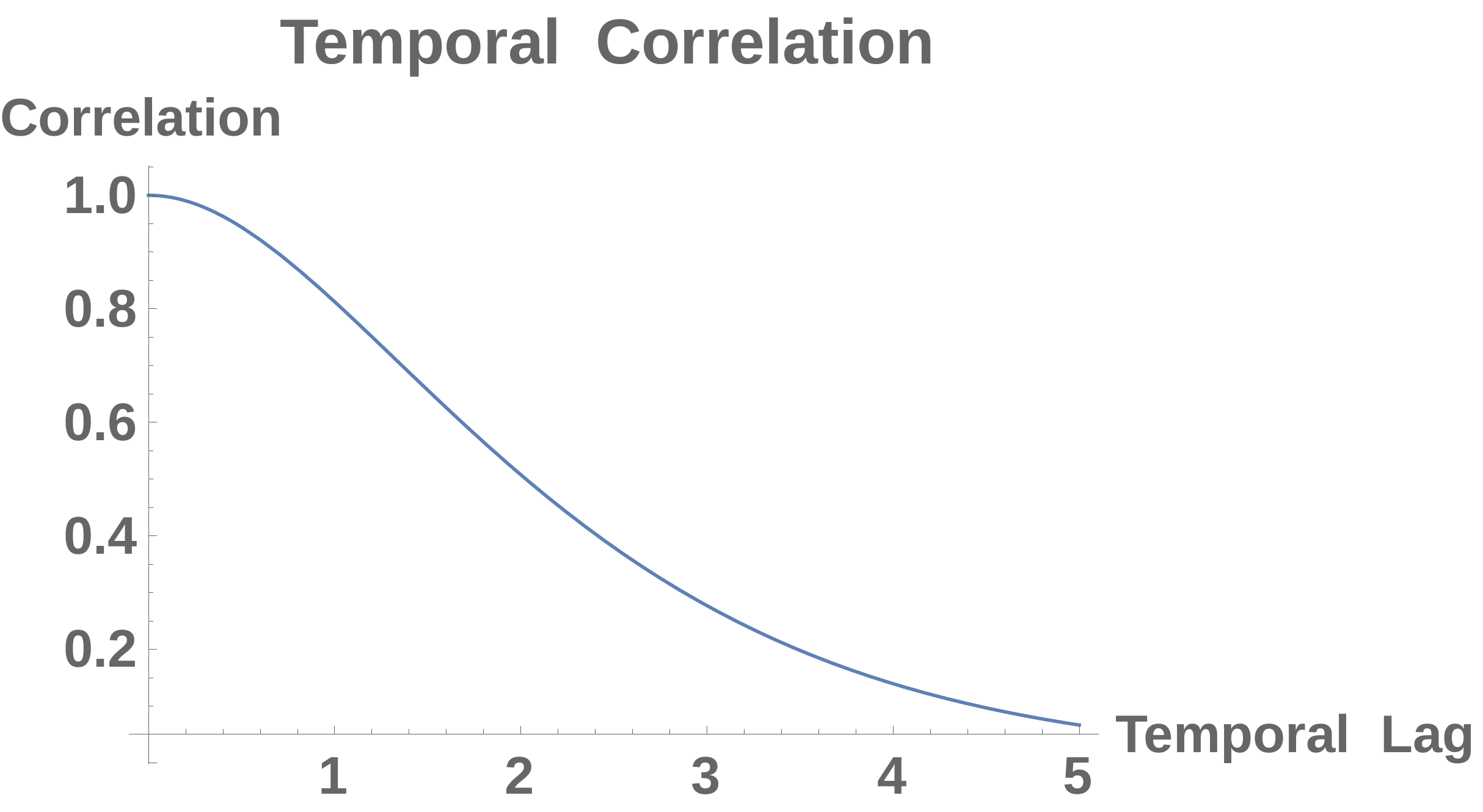}
\end{minipage}
\caption{Contour and marginal covariance plots for the Mat\'{e}rn in Time model in Example 1.  Effective smoothness in space is 0.5.  Effective smoothness in time is 2.}
\label{MTcovs}
\end{figure}

Of note in Figure \ref{MTcovs}(a) is the fact that a ``dimple''-type property is present in this model \citep{kent}.  Though the dimple we see in Figure \ref{MTcovs}(a) may not strictly follow the dimple definition given in \cite{kent}, the general phenomenon---that a dimple is indicative of higher covariances existing at non-zero spatial lags when the temporal lag is nonzero---is clearly present in this model.  A broader definition of a dimple may be the following: a dimple exists if a contour of a covariance function  is a non-convex shape.  Empirically, this appears to be caused by the large difference in $\kappa$ and $\nu$ and $\nu < \kappa$.  For $\kappa$ and $\nu$ closer together or $\nu > \kappa$, numerical experiments suggest this dimple either disappears or gets pushed further from the origin.  Since half-spectral forms in space can be defined similar to (\ref{mit}), models without apparent dimples with flexible smoothnesses in space and time can be defined.  Note, this model is a special case of a class of models considered by \cite{steincov}, but the half-spectral form of this model is not given in \cite{steincov} nor is the connection of this model to Mat\'{e}rn models mentioned therein.

\end{example}

\begin{example}{\bf 2 (Continuous AR(2) in Time)}

From \cite[p. 239]{Priestley}, a continuous analog of an auto-regressive process of order 2 can be defined by the following two parameter spectrum.
\begin{equation}
f(\omega) = \frac{1}{\pi}\left( \frac{\beta_1 \beta_2}{(\beta_2 - \omega^2)^2 + \beta_1^2 \omega^2} \right). \nonumber
\end{equation}
For large $\omega$, $f(\omega) \sim c_f \omega^{-4}$; hence this process is one time mean-square differentiable in time.  Contour and marginal plots are in Figure \ref{ARcovs}. 

\begin{figure}
\centering
(a)
\begin{minipage}[t]{0.45\linewidth}
\centering
\includegraphics[scale=.31]{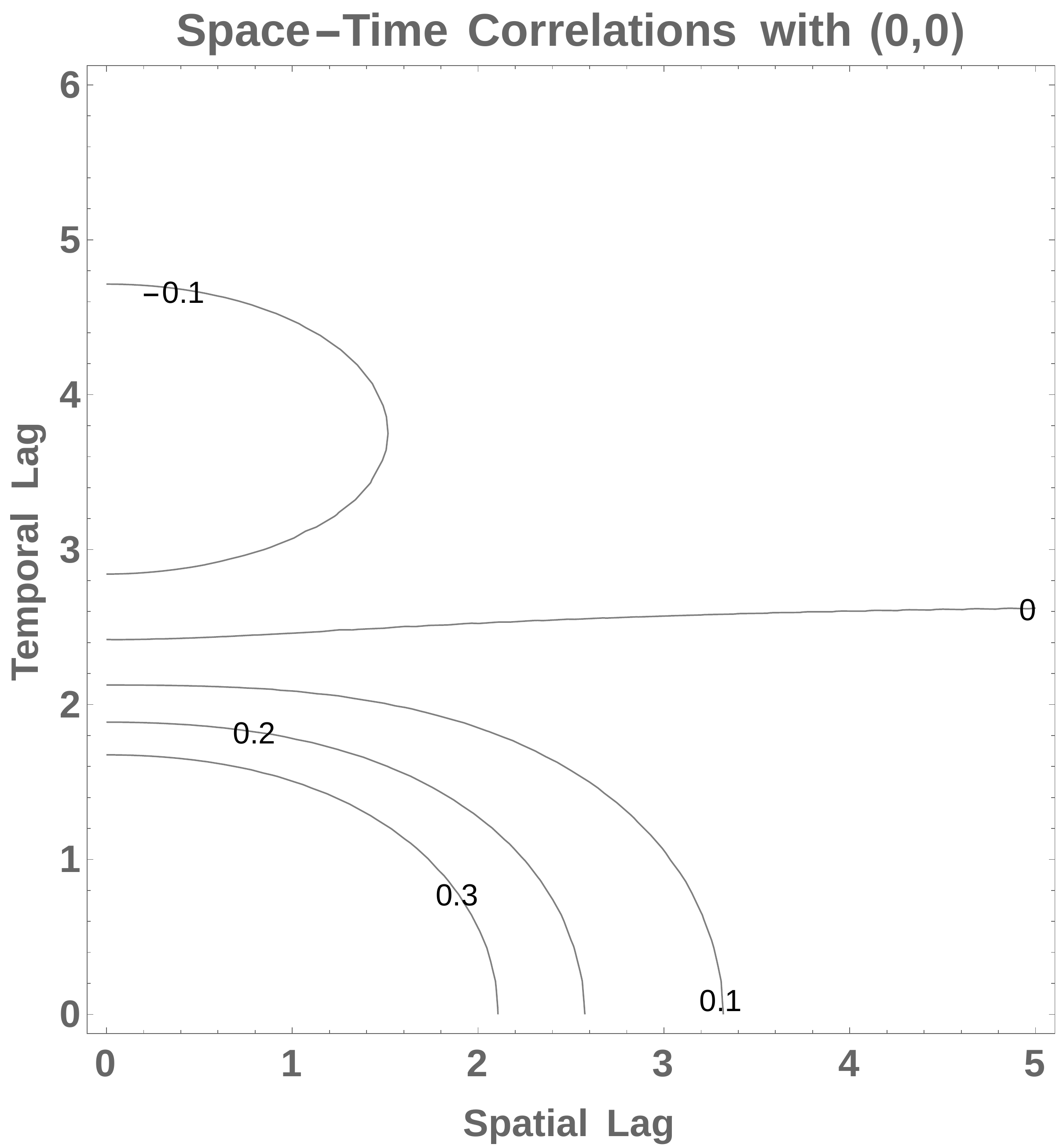}
\end{minipage}
(b)
\begin{minipage}[b]{0.45\linewidth}
\centering
\includegraphics[scale=.27]{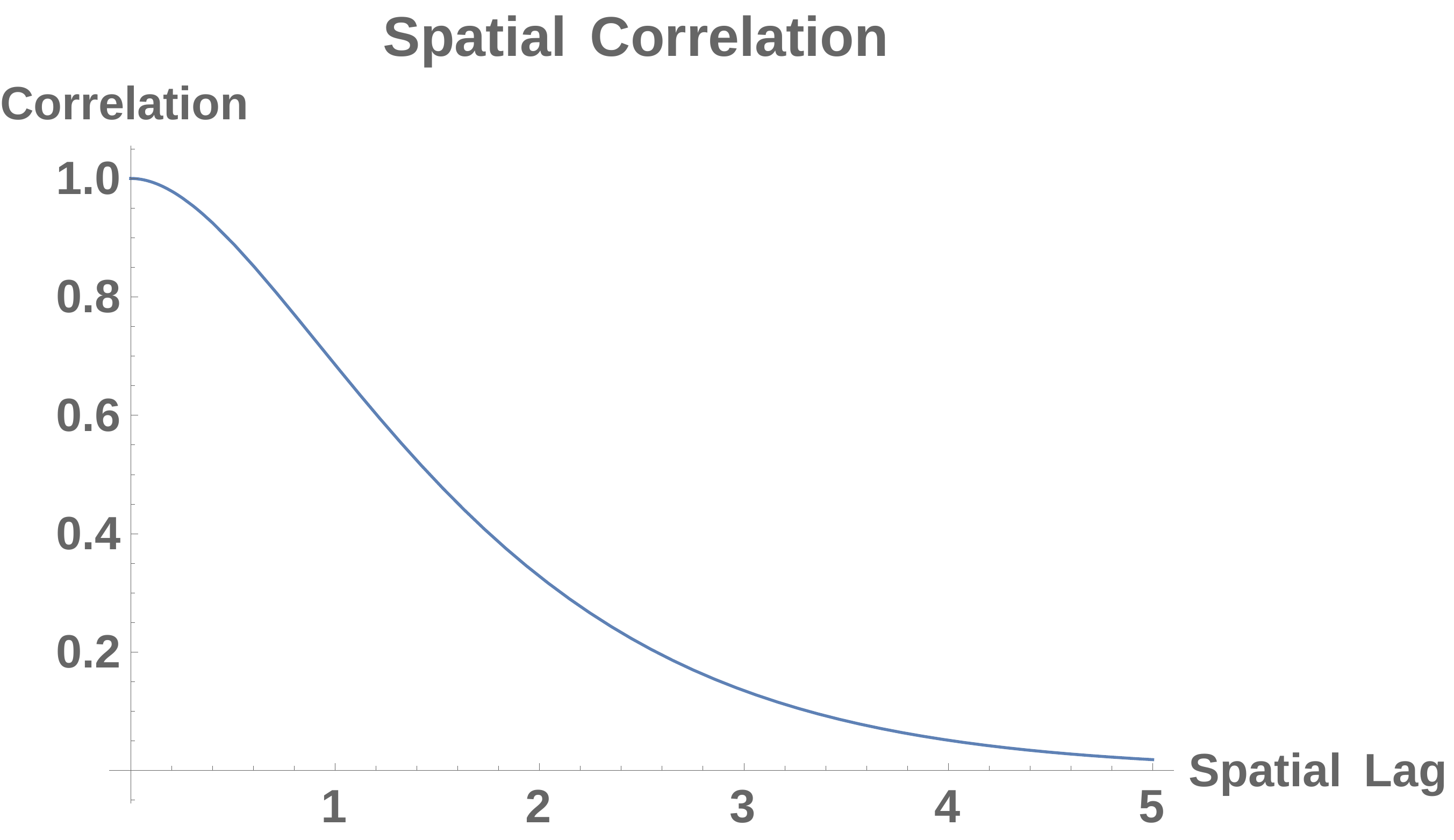}  \\ \includegraphics[scale=.25]{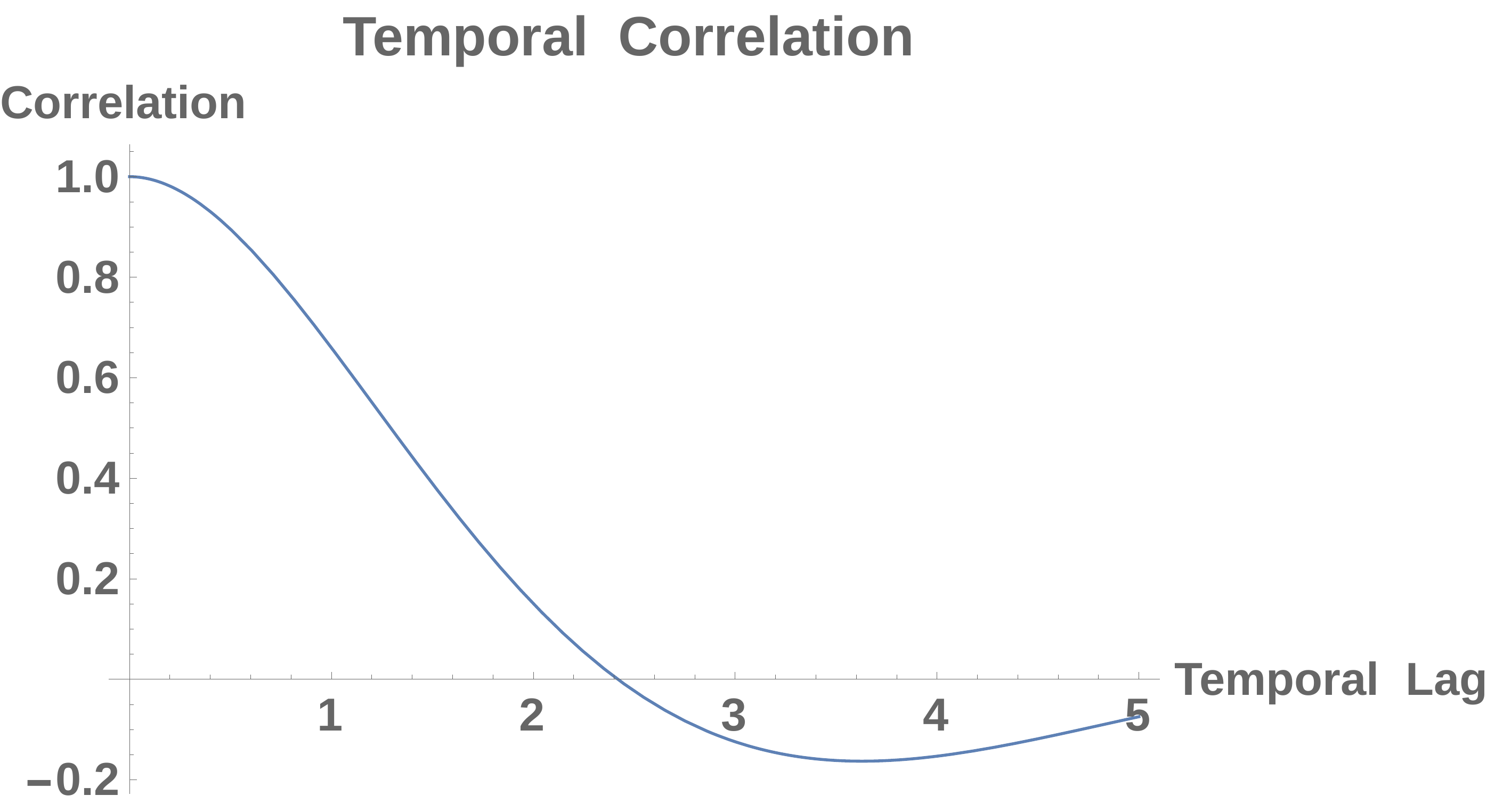}
\end{minipage}
\caption{Contour and marginal covariance plots for the AR(2) in Time model in Example 2.  Effective smoothnesses in space and time are each 3/2.  Auto-regressive parameters $\beta_1 = \beta_2 = 1$.}
\label{ARcovs}
\end{figure}

The oscillatory behavior of an AR(2) process is documented in Figure \ref{ARcovs}(a) and (b) by the negative covariances.  There is a hint of a dimple in Figure \ref{ARcovs}(a), though the effect does not appear to be severe.  Like the Mat\'{e}rn in Time model in Example 1, the dimple effect becomes more pronounced when smoothness in space is much lower than smoothness in time, and it either disappears or becomes unnoticeable when smoothness in space is much larger than smoothness in time.  
 
\end{example}

%

%


\subsection{A class of half-spectral models satisfying Restriction 2}

Restriction 2 effectively shows that a model of the form (\ref{hs}) must satisfy $\delta(\omega) \rightarrow \infty$ as $\omega \rightarrow \infty$ in order for (\ref{cond}) to hold.  Of course, models may satisfy $\delta(\omega) \rightarrow \infty$ as $\omega \rightarrow \infty$ without satisfying (\ref{cond}).  We consider $\delta \rightarrow \infty$ as $\omega \rightarrow \infty$ a condition of interest in its own right.  In particular, it forces coherences between time series at distinct spatial sites to tend to 0 as $\omega \rightarrow \infty$.  In this Section, we define a class of models satisfying $\delta(\omega) \rightarrow \infty$ as $\omega \rightarrow \infty$ but not (\ref{cond}). 

Briefly, if $f$ is a probability density (letting $\mathbb{C}$ be more generally a covariance function), we may write
\begin{equation}
K(s,0) = \mathbb{E}_\Omega\left[\mathbb{C}(s \delta(\Omega))\right] \label{HSfalse}
\end{equation}
where $\Omega$ is distributed according to $f$.  Hence, $K(s,0)$ depends only on the distribution of $\delta(\Omega)$.  Note further that since $K(0,t)$ is determined completely by $f$, $K(0,t)$ depends solely on the distribution of $\Omega$.  Since $f$ is even, $\Omega$ must be a symmetric random variable, and since $\delta(\omega)$ is even and positive, $\delta(\Omega)$ is distributed as the absolute value of a symmetric random variable.  Let $A$ and $B$ be continuous, symmetric random variables.  We can specify $\Omega \sim B$ and $\delta(\Omega) \sim |A|$; thus, $\delta(\omega) = |F_A^{-1}(F_B(\omega))|$ for $F_A$ and $F_B$ distribution functions. Therefore, in principle, $K(s,0)$ and $K(0,t)$ can be defined independently without using constant $\delta$.  Moreover, $\delta$ of this form increases unboundedly; hence this set of models does not violate Restriction 2.

Similar to a class of models defined in \cite{ma}, we further define this class by letting $\mathbb{C}(s) = \exp(-\|s\|^2)$.  With this $\mathbb{C}$, the spatial model in (\ref{HSfalse}) is a class of functions studied in \cite{schoenberg}.  If we allow $|A|$ to follow an arbitrary positive probability distribution on $\mathbb{R}$, \cite{schoenberg} proved spatial models in (\ref{HSfalse}) with $\mathbb{C}$ squared exponential characterize all covariance functions that are valid in all dimensions.  In other words, the class of models (\ref{HSfalse}) can marginally produce is quite vast.  Nonetheless, $\mathbb{C}$ as squared exponential has a spectrum proportional to $\exp(-\|\lambda\|^2/4)$; therefore, by Restriction \ref{t1} (ii), these models will not meet the condition in (\ref{cond}).

\begin{example} {\bf 3 (Marginal Mat\'{e}rn)}

Let $f = f_B = \beta^{2\kappa} \frac{\Gamma(\kappa + 1/2)}{\sqrt{\pi}\Gamma(\kappa)} (\beta^2 + \omega^2)^{-(\kappa + 1/2)}$ with $\kappa$, $\beta > 0$.  Let $\mathbb{C}(s) = \phi  \exp(-\alpha^2 \|s\|^2)$, and let $A$ be defined symmetrically by $|A| \sim 1/\sqrt{2 \chi^2_{2\nu}}$, where $\nu$, $\phi$, $\alpha > 0$.  It can be shown via results in \cite[p. 146]{bateman} that $K(s,0) = \phi/(2^{\nu - 1} \Gamma(\nu)) \mathcal{M}_{v}(\alpha s)$.  Since $f_B$ is a Mat\'{e}rn spectral density, $K(0,t) = \phi/(2^{\kappa - 1} \Gamma(\kappa)) \mathcal{M}_{\kappa}(\beta t)$. In Figure \ref{MarMatcovs}, we give marginal and contour plots comparable to those in Figure \ref{MTcovs}.
\begin{figure}
\centering
(a)
\begin{minipage}[t]{0.45\linewidth}
\centering
\includegraphics[scale=.35]{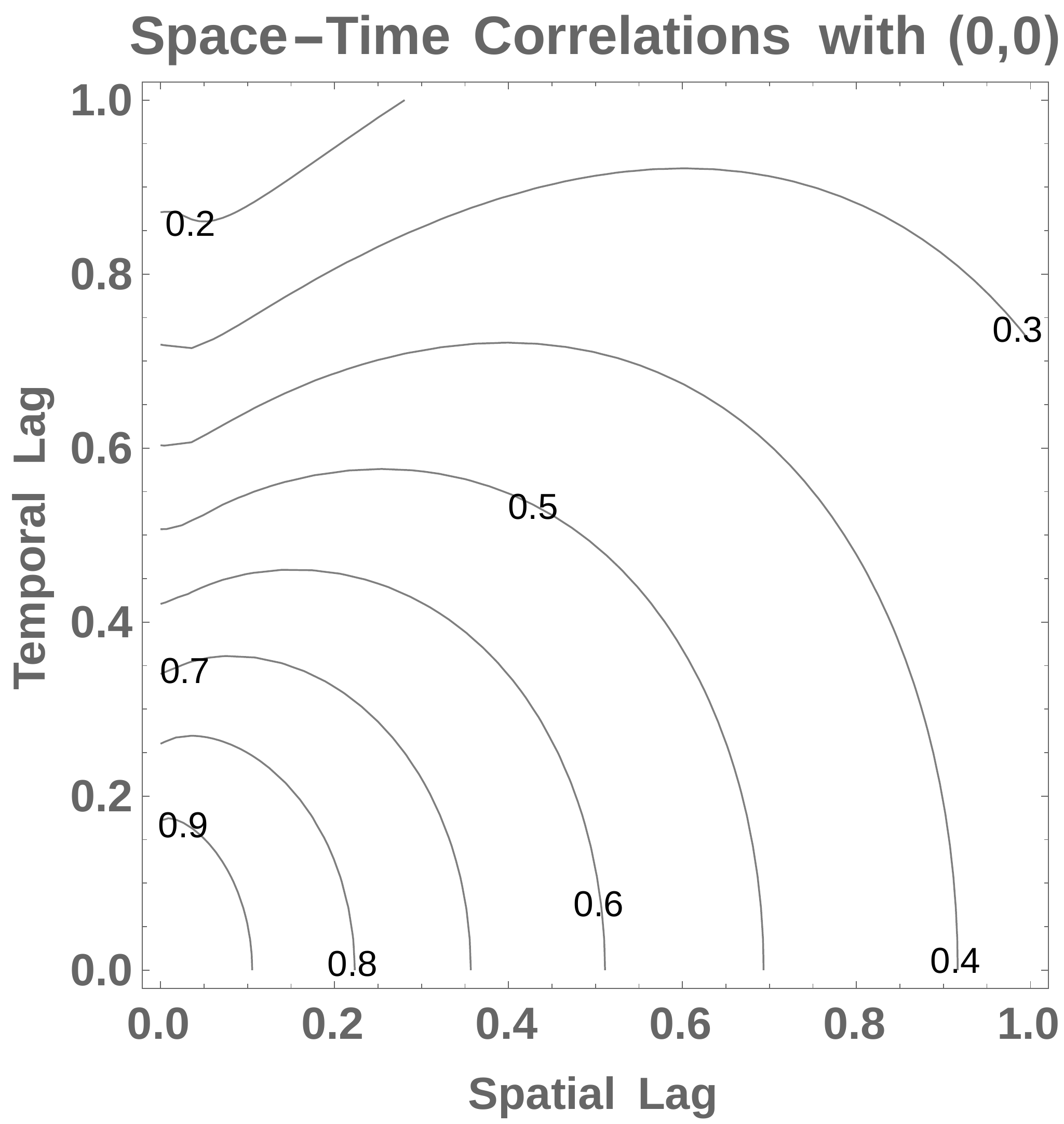}
\end{minipage}
(b)
\begin{minipage}[b]{0.45\linewidth}
\centering
\includegraphics[scale=.28]{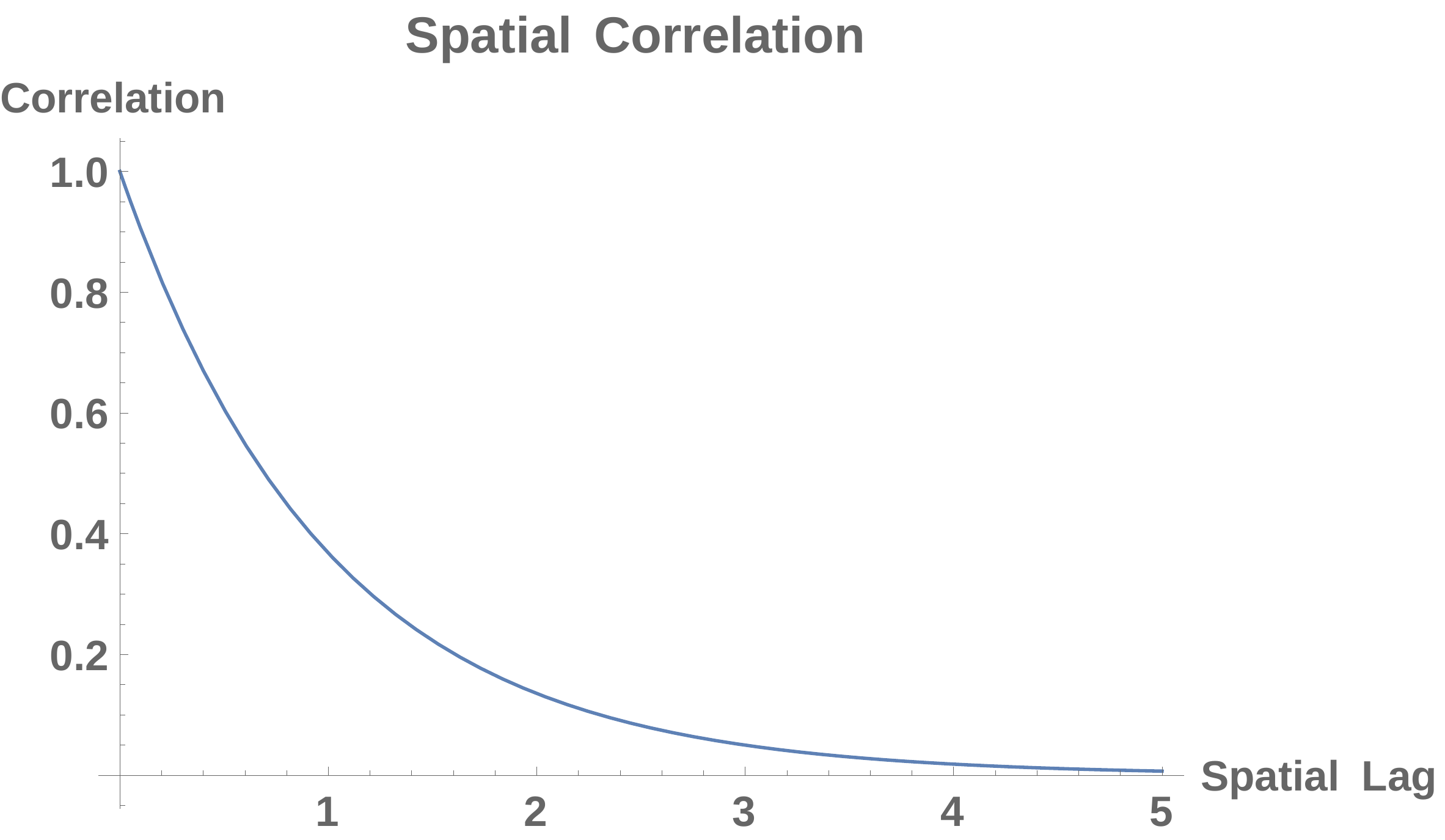}  \\ \includegraphics[scale=.28]{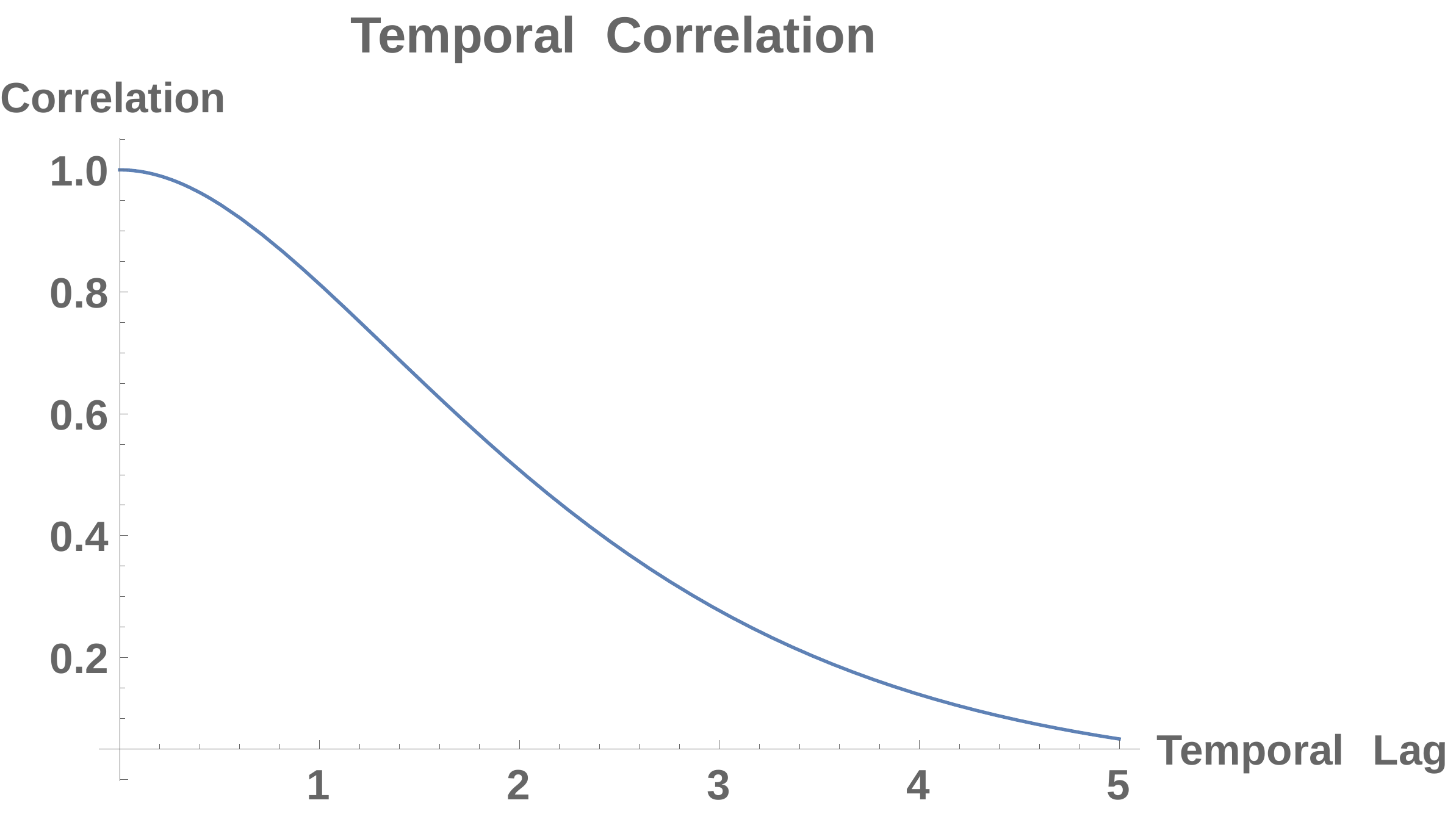}
\end{minipage}
\caption{Contour and marginal covariance plots for the Marginal Mat\'{e}rn model.  Effective smoothness in space is 1/2.  Effective smoothness in time is 2.}
\label{MarMatcovs}
\end{figure}
Figure \ref{MarMatcovs}(a) shows the dimple in this model is much more pronounced than that in Example 1.  Empirically, the dimple becomes less pronounced when the smoothnesses of the purely spatial and purely temporal processes are closer together, but the dimple does not disappear even when $\nu = \kappa$.  Explicit expressions for the half-spectrum of this model are given in the Appendix.

\end{example}

\section{Application}

We apply examples from the two classes of models presented in Section 3 to the Irish wind dataset first analyzed by \cite{Haslett}.  We compare fits from models in either class to similarly parameterized separable and non-separable models.  In making these comparisons, we focus on space-time interaction properties of the models we fit; hence, when possible, marginal spatial and temporal covariances and smoothnesses are fixed at common values across the models we fit.  We focus on space-time interaction to show the models we develop may be natural first choices to consider when analyzing space-time data.

The Irish wind dataset is a record of daily average wind speeds  collected at 12 sites in Ireland from 1961 to 1978.  This dataset has been analyzed several times previously by \cite{Haslett}, \cite{gneiting}, \cite{deluna} and \cite{regmondata} among others.  Like these authors, we deseasonalize the data by regressing wind-speed on a few (four) annual harmonics, and we discard data from the Rosslare site due to nonstationarities.  We also take square-roots to obtain more marginally Gaussian data, and we subtract station means of these deseasonalized and transformed data.  Finally, we divide each station's time series by its respective sample standard deviation.  The resulting dataset contains $6{,}574 \times 11 = 72{,}314$ observations with none missing.

There are subtleties in these data.  Most notably these data have been documented as having longer memory in time than covariance models with exponential tails can produce \citep{Haslett}.  Therefore, to make best comparisons within groups of models, we separate comparisons based on the temporal tail behavior (exponential tails versus algebraic tails) of the covariance models we consider.

The wind process also has properties that are not fully-symmetric \citep{gneiting}.  The primary model comparisons we make in this section involve fully-symmetric models because these provide the best comparisons to established models.  However, at the end of the application we fit a model that does not exhibit full-symmetry by using a simple model for the phase shift present in (\ref{hs}). 

\subsection{Regular monitoring data and model fitting}

Let $Z(s,t)$ be the daily average wind speed at a spatial location $s$ in Ireland observed on day $t$.  For these observations, we define models in 4-dimensional Euclidean space, $(s,t) \in \mathbb{R}^3 \times \mathbb{R}$, that we restrict to the actual domain, the sphere $\times$ time.  Since $Z(s,t)$ is observed at each spatial site, $s_i$, for each time point, $j$, we can write observational points in the form $(s_i,j)$ for all $i \in \{1, \hdots, 11\}$ and $j \in \{1, \hdots, 6574\}$.  Data of this form is called regular monitoring data \citep{regmondata}.

Fitting the models developed in Section 3 to regular monitoring data can be done by either computing covariances via a numerical evaluation of the integral in (\ref{hs}) or by viewing regular monitoring data as a multivariate time-series and using spectral methods.  We focus on spectral fitting in this paper; however, to quickly evaluate the integral in (\ref{hs}) at multiple spatial lags for a fixed temporal lag, discrete Fourier transforms can be used to speed computations.  We primarily use spectral fitting for convenience.  It should also be noted that $\mathbb{C}(s\delta(\omega))$ can be seen as the cross-spectra of spatially situated multivariate time series \citep{regmondata}.

Spectral fitting requires maximization of a multivariate Whittle likelihood.  The models in Section 3 are defined in continuous
space and time; hence, if we wish to use the multivariate Whittle likelihood, which estimates the spectrum of a discretely observed
time series, we will need to account for aliasing.
We use a simple aliasing correction. For the continuous model $f(\omega)\mathbb{C}(s \delta(\omega))$, we approximate a finite version of this function with $\sum_{j = -m}^{m} f(\omega + 2 j \pi) \mathbb{C}(s \delta(\omega+2 j \pi))$ for $-\pi < \omega < \pi$.  Note this sum is a truncation of the proper correction in which the sum goes from $-\infty$ to $\infty$.  We pick the truncation parameter $m = 50$ for convenience. For the models we fit, increasing $m$ beyond 50 appeared to have little effect on likelihoods and parameter estimates; however, in general, tail behavior of spectrums can vary tremendously; hence, this truncation should be determined carefully in specific applications.

\subsection{Results}

\subsubsection{Models with Exponentially Decaying Temporal Tails}

We fit the models in Examples 1 and 3 to these data.  For comparison, we fit a Mat\'{e}rn model as well as a separable model with exponential marginal covariances.  We also fit a model in \cite{creshua} that was found to best fit a different dataset of wind speeds.  Since the best-fitting model in \cite{creshua} is separable with squared exponential covariance in space and exponential covariance in time, the half-spectral form of this model is straightforward to obtain.  For the \cite{creshua} model, $f(\omega) \mathbb{C}(s\delta(\omega)) = \phi (\beta^2 + \omega^2)^{-1} \exp(-\alpha^2 \| s\|^2)$.  Note $\delta$ is constant in this model due to its separability.  \cite{creshua} fit their models using a weighted least squares method to match the model variogram to the empirical variogram.  In the model variogram they include a term related to $\| s \|$.  Since we subtract site means prior to analysis of the Irish wind data, we do not need to use an analog of this term in our models.

The only modification we make to all four exponential temporal tail models is to add a spatial nugget similar to \cite{gneiting} and \cite{regmondata}.  The half-spectral representations of our models therefore take the general form
\begin{equation}
f(\omega) \left[ \mathbb{C}(s \delta(\omega)) + \eta^2 \mathbb{I}_{\{s = 0\}} \right] \label{spacenug}
\end{equation}
where $\mathbb{I}_{\{\cdot\}}$ is an indicator function and $\eta^2 \geq 0$.

For models with flexibility in determining smoothness in time (Examples 1 and 2 and the Mat\'{e}rn models), temporal smoothness parameters, $\kappa$, were fixed at 1/2 to allow for better comparison of all models.  Note for the Mat\'{e}rn model, this also fixes the spatial smoothness.  To evaluate the assumption that $\kappa = 1/2$, we fit a univariate Mat\'{e}rn model to a subset of the time series at each station, and we found $\kappa = 1/2$ to be a reasonable estimate of temporal smoothness.

\begin{table}
\caption{\label{table1}Comparison of log-likelihoods of short memory in time models fitted to the Irish wind dataset.}
\centering
\fbox{%
\begin{tabular}{*{4}{l r r r }}
\multicolumn{4}{c}{Whittle Log-Likelihoods of Models with Exponential Temporal Tails} \\
\hline
& Log-Likelihood  & Diff. From Ex. 1 & \# Param. Fit \\
 \hline
Example 1 ($\kappa =  1/2$) & 20,318 & 0 & 5\\
Mat\'{e}rn (Example 1, $\kappa = \nu = 1/2$) & 20,199 &  119 & 4 \\
Example 3 ($\kappa = 1/2$) & 19,441 & 877 & 5 \\
Separable Exponential & 18,703 & 1,615 & 4 \\
\cite{creshua} & 18,378 & 1,940 & 4\\
\hline
\end{tabular}}
\end{table}

A comparison of the multivariate Whittle likelihoods is in Table \ref{table1}. Here, the model from Example 1 is shown to provide the best fit for these data.  That Example 1 outperforms a Mat\'{e}rn model is expected since Example 1 extends the Mat\'{e}rn class.  The added flexibility in specifying the spatial smoothness appears to add a small amount to the log-likelihood in this case.  With Example 1, the effective smoothness in space is estimated to be $0.4$; hence a modest improvement in log-likelihood is not surprising.  For processes estimated to have drastically different smoothnesses in space and time, we may  expect a larger improvement in using Example 1 over a Mat\'{e}rn model.

A look at the remaining fits shows smoothness and space-time interaction matter in terms of model fitting.  For example, the Mat\'{e}rn model compared to the separable exponential model shows an increase in 1,496 of the log-likelihood.  Both models have the same number of fitted parameters and the same smoothnesses in space and time; yet, the Mat\'{e}rn appears able to capture the space-time interaction of the Irish wind process better.  Comparing the separable exponential to the model in \cite{creshua}, we see modeling variations in space as analytic is perhaps unreasonable.  Comparing Example 3 to Example 1, we also see simply being able to fit marginal covariances may not directly lead to good models.  Nonetheless, Example 3 appears to capture space-time interaction in these data better than the separable models present in Table \ref{table1}.

\subsubsection{Models with Algebraic Temporal Tails}

\cite{gneiting} uses the following covariance model, denoted by $G$, to analyze these data:
\begin{equation}
G(s,t) = \phi (\beta | t |^\kappa + 1)^{-1} \exp\left( -\frac{\alpha \| s \|}{ (\beta | t |^\kappa + 1)^{\gamma/2}}\right) + \eta^2 (\beta | t |^\kappa + 1)^{-1} \mathbb{I}_{\{s = 0\}} \label{gmodel}
\end{equation}
where $\beta >  0$, $\kappa \in [0,2]$ and $\gamma \in [0,1]$.  The last term in (\ref{gmodel}) is the spatial nugget as also seen in (\ref{spacenug}).  The temporal marginal covariance $G(0,t)$ has algebraic tail behavior.  When $\kappa \leq 1$, $G$ is a classical long memory model in time.  In the case where $\kappa > 1$, $G(s,t)$ is not formally long memory in time since $G(0,t)$ will be integrable; however, the algebraic decay in $G(0,t)$ for large $t$ is much slower than the exponential decay present in the models in Table \ref{table1} and fits the Irish wind data better than exponentially decaying models in Section 4.2.1.

To fit $G$ using a multivariate Whittle likelihood method, the half-spectrum of $G$ is needed, but we do not know of a general closed form half-spectral representation of $G$ currently.  However, specific separable half-spectral forms of $G$ can be found.  For $\gamma = 0$, $G$ is separable and exponential in space.  For $\kappa = 1$, a closed form for the spectral density of $G(0,t)$ is readily calculated (see Appendix); hence, the half-spectrum of a separable $G$ that is long-memory in time can be specified as long as $\kappa = 1$ and $\gamma = 0$ are fixed.  Note that \cite{gneiting} estimated $\kappa = 1.544$.  If we instead let $\kappa = 2$, a closed form spectrum for $G(0,t)$ can also be found, but it is exponential and thus $G(0,t)$ for this model will not meet the condition in (\ref{cond}).  With $\kappa = 1$, $G(0,t)$ meets the condition in (\ref{cond}) since the spectrum of $G(0,t)$ denoted $f_G$ decays proportional to $\omega^{-2}$ (details in the Appendix).

Since $f_G$ decays asymptotically proportional to $\omega^{-2}$, we can plug $f_G$ into the class in (\ref{HSFORM2}) to obtain $K_{f_G}$, a model that satisfies the condition in (\ref{cond}).  Fitting both $K_{f_G}$ with a spatial nugget and $G$ with $\kappa =1$ and $\gamma = 0$ by their respective half-spectra, we obtain the results in Table \ref{table2}.  We fix $\nu = 1/2$ for $K_{f_G}$ to achieve the same smoothness in space for $K_{f_G}$ as exists for $G$ with $\gamma = 0$ and $\kappa = 1$.

\begin{table}
\caption{\label{table2}Comparison of log-likelihoods of longer memory models fitted to the Irish wind dataset.}
\centering
\fbox{%
\begin{tabular}{*{4}{l  r  r  r}}
\multicolumn{4}{c}{Whittle Log-Likelihoods of Models with Algebraic Temporal Tails} \\
\hline
 & Log-Likelihood  & Difference From $K_{f_G}$ & \# Parameters Fit \\
 \hline
$K_{f_G}$ ~~ ($\nu = 1/2$) & 21,655 & 0 & 4\\
$G$ ~~ ($\gamma = 0$, $\kappa = 1$) & 21,245 & 410 & 4 \\
\end{tabular}}
\end{table}

Table \ref{table2} shows again that compared to similar non-separable models, a separable model will not adequately fit these data.  This exercise shows further that the class of models presented in Section 3.1 is quite flexible and can be adapted straightforwardly to many different scenarios.  Comparing the log-likelihoods of the short memory models in Table \ref{table1} to the longer memory models in Table \ref{table2}, we find that $G$ with $\gamma = 0$, a separable model, to better fit these data than the non-separable models in Table \ref{table1}.  The difference between $G$ in Table \ref{table2} and Example 1 from Table \ref{table1} is 927 Whittle log-likelihood units.  This is not completely surprising given that long memory properties of these data are apparent in looking at marginal correlation plots \citep{Haslett}.   

As a final comparison, we fit $K_{f_G}$ and a non-separable version of $G$ by maximizing space-time likelihoods (as opposed to the frequency domain Whittle likelihoods).  The size of this dataset (72,314 observations) is quite difficult for space-time likelihood fitting of these models without making use of computational shortcuts.  Specifically, for these data, calculating a log-determinant and a solve involving a $72{,}314\times72{,}314$ covariance matrix is required to compute the space-time likelihood under the multivariate Gaussian assumption; however, even storing a matrix of this size in RAM will not be possible on most machines as it requires over 40GB of memory using standard precisions.  To perform this calculation, we make use of the regular monitoring structure of these data and the temporal stationarity of these models.  Sorting the data first by time and second by location leads to a Block-Toeplitz covariance matrix with blocks of size $11\times11$.  Using a Block-Toeplitz Levinson-type algorithm, one can compute the log-determinant and solve terms simultaneously with greatly reduced computational burden \citep{akaike}. 

The covariance model $G$ is given in its space-time representation in (\ref{gmodel}).  The space-time covariance function $K_{f_G}$ must be computed numerically from the half-spectrum specified by (\ref{HSFORM1}).  We approximate the covariance function $K_{f_G}$ via a finite approximation of the integral in (\ref{hs}).  As mentioned in Section 4.1, an approximation of the integral in (\ref{hs}) for many different time lags, $t$, can be computed quickly using the fast Fourier transform.

To ensure $G$ and $K_{f_G}$ are most comparable, we fix $\nu = 1/2$ for $K_{f_G}$ and $\kappa = 1$ for $G$.  With these parameter values, both models have the same smoothnesses in space and (up to range and scale parameters) identical marginal temporal covariance models.  For $G$, the separability parameter $\gamma$ is estimated from the data.  Results from fitting these models are in Table \ref{table3}.

\begin{table}
\caption{\label{table3}Comparison of log-likelihoods of longer memory models fitted to the Irish wind dataset. In contrast to Table \ref{table2}, $G$ is a non-separable model.}
\centering
\fbox{%
\begin{tabular}{*{10}{l  r  r  r}}
\multicolumn{4}{c}{Space-Time Log-Likelihoods of Models with Algebraic Temporal Tails} \\
\hline
 & Log-Likelihood  & Difference From $K_{f_G}$ & \# Parameters Fit\\
 \hline
$K_{f_G}$ ~~ ($\nu = 1/2$) & 21,471 & 0 & 4\\
$G$ ~~ ($\kappa = 1$) & 21,327 &  144 & 5\\
\hline
\end{tabular}}
\end{table}

Model $K_{f_G}$ fits the data better than $G$ while simultaneously requiring fewer estimated parameters.  The separability parameter $\gamma$ is estimated to be 0.5; hence $G$ is non-separable.  The marginal spatial and marginal temporal covariances in both models are quite similar; thus the slight edge $K_{f_G}$ appears to have over $G$ in this case may be due to the ability of $K_{f_G}$ to better capture space-time interactions in these data.  Comparing Table \ref{table3} to Table \ref{table2}, we see a difference in likelihoods for $K_{f_G}$.  Due to the approximations made in calculating both Whittle-type and space-time likelihoods, some discrepancy is expected.

\subsubsection{A Space-time Asymmetric Model}

Starting with $K_{f_G}$, we generate a space-time asymmetric model, $\tilde{K}_{f_G}$, by picking $\theta(\omega)\phi$ to be non-zero.  We use a simple linear translation model for the phase shift, giving $\theta(\omega) = \rho \omega$, where $\rho \in \mathbb{R}$.  The spatial direction of this shift is determined by $\phi$.  Since $\theta(\omega)$ is locally bounded, $\tilde{K}_{f_G}$ will satisfy (\ref{cond}).  \cite{gneiting} remarks on the east-west asymmetry in these data; hence, for simplicity, we fix $\phi \in \mathbb{R}^3$ to be a unit vector pointing in a direction consistent with an east-to-west direction over Ireland.  In principle, $\phi$ can be estimated from the data.  Lastly, for best comparison to $K_{f_G}$ in Table \ref{table2}, $\nu$ is again fixed at $1/2$ for $\tilde{K}_{f_G}$.

Using Whittle likelihood fitting, $\tilde{K}_{f_G}$ has a log-likelihood of 21,922.  Comparing to $K_{f_G}$ in Table \ref{table2}, we see a marked advantage to taking into account space-time asymmetry with these data.  The increase in Whittle log-likelihood is 267 units. The value $\rho$ was estimated to be positive; hence, $\tilde{K}_{f_G}$ is consistent with the simple measures of asymmetry used in \cite{gneiting}.

Lastly, it is important to note that in the context of modeling a process on the entire globe, this phase shift model will not be entirely sensible.  The vector $\rho \omega \phi$ indicates a linear phase shift through 3-dimensional Euclidean space.  Our data are on the globe however; therefore, an east-west phase shift is not everywhere accurately described by the single vector $\phi$.  For example, while $\phi$ is an east-to-west vector over Ireland, it is a west-to-east vector on the opposite side of the globe (for example, in Japan).  A better way to model phase shifts on the globe in the context of half-spectral models is given in \cite{regmondata}, wherein phase shifts are akin to rotations of the globe.  In this type of phase shift model, phases depend on differences in longitude and/or latitude and not a lag vector in $\mathbb{R}^3$.  Unfortunately, rotation based phase models do not fit exactly within the theory presented in Section 2.  Therefore, we have opted to use the locally accurate phase shift described by $\rho \omega \phi$.  Since Ireland covers only a small range of longitudes, using this simplification should not affect results in any substantial way.

\section{Discussion}


We developed two new classes of continuous space-time dependency models via half-spectra defined by the form $f(\omega) \mathbb{C}(s\delta(\omega)) \exp(i \theta(\omega) \phi)$.  Both classes of models allow for substantial flexibility in modeling the marginal processes defined by $K(s,0)$ and $K(0,t)$.  Moreover, both classes of models are non-separable.  These models are further theoretically validated by meeting at least one of two restrictions we developed in Section 2 to guide our model building.  These restrictions were developed through consideration of a natural condition posed in \cite{steinscreen}. Our model building focused on fully-symmetric models defined by $f(\omega) \mathbb{C}(s\delta(\omega))$; however, Theorem 1 in Section 2 established that inducing space-time asymmetry in these models using a non-zero $\theta(\omega)$ can be done naturally and easily.

We compared fully-symmetric examples from the new classes of models in this paper to separable models and to a non-separable model developed in \cite{gneiting} by fitting these models to the Irish wind dataset in Section 4.  Models from the class of half-spectra defined in (\ref{HSFORM1}) were shown to fit the Irish wind data better than all other models we examine in this paper in terms of likelihood.  \cite{Haslett} noted these data exhibit long-memory properties in time, and in Section 4, we showed the flexibility of the first class of models by fitting a model that is long-memory in time that also meets the condition (\ref{cond}).  We also fit an adaptation of this model using a non-zero $\theta(\omega)$ to illustrate the space-time asymmetry present in these data. Half-spectral models of the form in (\ref{HSFORM2}) therefore appear to be quite flexible and may be a good starting point for many model building purposes.

Space-time asymmetry was modeled as a phase shift using $\theta(\omega)$ in this paper.  Other methods exist that produce space-time asymmetry that may be of useful in half-spectral modeling.  Geometric transformations of space in particular may be used.  In such models, a spatial lag vector, $s$, may be substituted for a spatial lag shifted through time, $s + V t$, where $V$ is a vector indicating the magnitude and direction of the spatial shift.  The interpretation of this asymmetry is straightforward.  A model with this type of spatial shift is appropriate for physically moving processes.  An added benefit to using this type of asymmetry is that an adaptation of Theorem 1 exists in this setting: these types of geometric transformations will not affect whether or not a model satisfies (\ref{cond}).   Additional methods to produce space-time asymmetry may also be used in the half-spectral setting. Taking derivatives of a process in the manner of \cite{jun2} can lead to space-time asymmetries as well as spatial anisotropy. Non-parametric deformation as seen in \cite{deformation} can also produce asymmetry.

The half-spectral forms present in this paper serve as one step toward a more coherent melding of spatial and temporal methods in statistics. A primary avenue of future research may be to develop computationally efficient fitting procedures for half-spectral models on data that do not have the nice structure of non-missing regular monitoring data.  Adaptation of composite likelihood methods [\cite{lindsay}, \cite{vecchia} and \cite{steinvecchia}] may serve as one way forward as these methods allow for considerations of only subsets of the full dataset.

\section{Appendix}

In this appendix, we prove the Theorems and Restrictions presented in Section 2.  We give explicit forms for the half-spectrum of the model from Example 3 and of the marginal temporal spectrum, $f_G$ of $G$ with parameter $\kappa$ fixed at 1.  We also give details on the fitting procedures used in Section 4.

\subsection{Proofs of Theorem 1 Statement (i) and the Restrictions}

Proofs of Theorem 1 Statement (i) and Restrictions 1 and 2 can be found here.  Proof of Theorem 1 Statement (ii) has been put in Section 6.2 because its proof relies on two lemmas.

\begin{ths1}
Let $g(\lambda,\omega)$ be the full spectrum of a space-time covariance function where $\lambda$ is the spatial wavenumber and $\omega$ is the temporal frequency. Let $\theta(\cdot)$ be an odd function, and let $\phi$ be a unit vector in $\mathbb{R}^d$. The following statement holds:
\begin{itemize}
\item[(i)] Let $\theta(\cdot)$ be locally bounded.  The full spectrum $g(\lambda,\omega)$ satisfies (\ref{cond}) if and only if \newline $g(\lambda - \theta(\omega) \phi,\omega)$ satisfies (\ref{cond}).  
\end{itemize}
\end{ths1}

\begin{proof}
First, suppose $g(\lambda,\omega)$ satisfies (\ref{cond}).  If we can prove $\|(\lambda - \theta(\omega)\phi,\omega)\| \rightarrow \infty$ as $\|(\lambda,\omega)\| \rightarrow \infty$, we will have proved $g(\lambda - \theta(\omega)\phi,\omega)$ will satisfy (\ref{cond}).  When $\|(\lambda,\omega)\| \rightarrow \infty$, there are two cases to examine: $|\omega| \rightarrow \infty$ and the case where $\omega$ is bounded as $\| \lambda \|$ diverges.

If $|\omega| \rightarrow \infty$,  $\|(\lambda - \theta(\omega)\phi,\omega)\|$ diverges to infinity.  The remaining case, where $\omega$ is bounded as $\lambda$ grows, must be considered.  Since $\theta$ is locally bounded and $\omega$ is bounded, there exists some fixed $A$ such that $|\theta(\omega)| < A$ for all $\omega$ as $\| ( \lambda,\omega ) \| \rightarrow \infty$. We have
\begin{eqnarray}
\| \lambda - \theta(\omega) \phi \| &=& \sqrt{\lambda^T \lambda - 2 \theta(\omega) \lambda^T \phi + \theta(\omega)^2} \nonumber \\
& \geq & \sqrt{\| \lambda \|^2 - 2 \| \lambda \| A}. \nonumber
\end{eqnarray}
Thus, $\| \lambda \| \rightarrow \infty$ implies $\| \lambda - \theta(\omega) \phi \| \rightarrow \infty$.

The reverse direction may be proved by recognizing $g(\lambda,\omega)$ as simply a phase shifted version of $g(\lambda - \theta(\omega)\phi,\omega)$ using the valid phase shift function $-\theta(\omega)$.

\end{proof}



\begin{restriction}
Let the spectral representation of covariance function $K$ have the form $g(\lambda,\omega) = p(\omega) q(\lambda,\omega)$.  The following two statements hold for spectral densities with this parameterization.
\begin{itemize}
\item[(i)] If $g$ and $q$ satisfy the condition in (\ref{cond}), then $p$ must be constant in $\omega$.
\item[(ii)] If $g$ satisfies the condition in (\ref{cond}), then for every point $\omega_0$ such that $p(\omega_0)$ is finite and positive, the marginal spectrum $g^*_{\omega_0}(\lambda) = g(\lambda,\omega_0)$ must meet the condition in (\ref{cond}).
\end{itemize}
\end{restriction}

\begin{proof}
For both (i) and (ii), we prove the contrapositive of each statement.  First we prove (i). Assume $p$ is not constant and $q$ satisfies (\ref{cond}).  Then there exists some $\omega_1$ and $u_1$ and $R$ such that $p(\omega_1 + u_1)/p(\omega_1) = C > 1$ with $|u_1| < R$.  Let $\omega$ be fixed at $\omega_1$; hence $\|\lambda \| \rightarrow \infty$ in the limit in (\ref{cond}).  Taking the supremum in (\ref{cond}), we have
\begin{eqnarray}
\sup_{\|(u,v)\| < R} \left| \frac{p(\omega_1 + u)q(\lambda + v,\omega_1+u)}{p(\omega_1)q(\lambda,\omega_1)} - 1 \right| &\geq & \left| \frac{p(\omega_1 + u_1)q(\lambda,\omega_1+u_1)}{p(\omega_1)q(\lambda,\omega_1)} - 1 \right| \nonumber \\
&=& C \left|  \frac{q( \lambda,\omega_1+u_1)}{q( \lambda,\omega_1)} - \frac{1}{C} \right| \nonumber
\end{eqnarray}
From here, we can bound the limit
\begin{eqnarray}
\lim_{\|\lambda\| \rightarrow \infty} \sup_{\|(u,v)\| < R}  \left| \frac{p(\omega_1 + u)q(\lambda + v,\omega_1+u)}{p(\omega_1)q(\lambda,\omega_1)} - 1 \right| &\geq& C \lim_{\|\lambda\| \rightarrow \infty} \left|  \frac{q(\lambda, \omega_1+u_1)}{q( \lambda, \omega_1)} - \frac{1}{C} \right| \nonumber \\
& = & C -1  \nonumber
\end{eqnarray}
The last expression is strictly larger than 0; hence $g$ cannot satisfy the condition in (\ref{cond}).  By contraposition (i) is proved.

For case (ii), assume there exists $\omega_0$ such that $p(\omega_0)$ is finite and positive, and $q(\lambda,\omega_0)$ does not meet (\ref{cond}) marginally in $\lambda$. Fix $\omega = \omega_0$, and let $u = 0$. Since $0 < p(\omega_0) < \infty$,  $p(\omega_0)/p(\omega_0) = 1$. The condition in (\ref{cond}) on $g$ is therefore equivalent to the same condition on $q(\lambda,\omega_0)$, which is not satisfied by assumption. Statement (ii) follows again by contraposition. 
\end{proof}


\begin{restriction}
Let $g(\lambda,\omega) = p(\omega)q(\lambda/\delta(\omega))$, where $p(\omega)$ is positive and finite for all sufficiently large $|\omega|$, $q$ is a non-negative, continuous, integrable function that is not identically zero, and $\delta(\omega)$ is even with a well-defined limit as $|\omega| \rightarrow \infty$.  If $g$ satisfies (\ref{cond}),  then $\lim_{|\omega| \rightarrow \infty} \delta(\omega) = \infty$. 
\end{restriction}

\begin{proof}
Like Restriction \ref{t1}, we prove Restriction \ref{t2} by contraposition.  Assume $\lim_{|\omega| \rightarrow \infty} \delta(\omega) = E$, where $E < \infty$.  Take $\|(\omega,\lambda)\| \rightarrow \infty$ along a path for which $|\omega| \rightarrow \infty$ and $\lambda = \delta(\omega) v_0$, where $v_0 \in \mathbb{R}^d$ and $q(v_0) > 0$.  Since $q$ is integrable, there exists some $v_1 \neq 0$ and some $R$ such that $q(v_0 + v_1/E) \neq q(v_0)$ and $\|v_1\| < R$. The supremum in (\ref{cond}) can be bounded from below:
\begin{eqnarray}
\sup_{\|(u,v)\|<R} \left| \frac{p(\omega + u) q\left(\frac{1}{\delta(\omega+u)} \left( \delta(\omega)v_0 + v \right) \right)}{p(\omega) q\left(v_0 \right)} - 1 \right| & \geq & \sup_{\|(0,v)\| < R}\left| \frac{q\left(v_0+\frac{v}{\delta(\omega)}\right)}{q\left(v_0\right)} - 1 \right|  \nonumber \\
& \geq & \left| \frac{q\left(v_0+\frac{v_1}{\delta(\omega)}\right)}{q\left(v_0\right)} - 1 \right| \nonumber
\end{eqnarray}
Since $q$ is continuous and $\lim_{|\omega| \rightarrow \infty} \delta(\omega) = E$, the limit of the supremum in (\ref{cond}) is bounded away from zero: $\lim_{\|(\lambda,\omega)\| \rightarrow \infty} \sup_{\|(u,v)\|<R} |g(\lambda + v,\omega + u)/g(\lambda,\omega) - 1| > 0$. 
\end{proof}

\subsection{Proof of Theorem 1, Statement (ii)}

The proof of Theorem 1, Statement (ii) relies on the following two lemmas.

\begin{lemma}
Let $g(\xi)$, $\xi \in \mathbb{R}^{d+1}$ be a non-negative, integrable function.  If $g(\xi)$ satisfies (\ref{cond}), then $\lim_{\|\xi\|\rightarrow\infty} g(\xi) = 0$. \label{Lemma1}
\end{lemma}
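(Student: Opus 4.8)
The plan is to argue by contradiction, exploiting the tension between the relative flatness forced by (\ref{cond}) and the integrability of $g$. Suppose $g(\xi) \not\to 0$ as $\|\xi\| \to \infty$. Then there is some $\epsilon > 0$ and a sequence $\{\xi_n\}$ with $\|\xi_n\| \to \infty$ and $g(\xi_n) \geq \epsilon$ for all $n$. The strategy is to show that (\ref{cond}) forces $g$ to stay bounded below by $\epsilon/2$ on a ball of fixed radius around each $\xi_n$, and then to thin the sequence so that these balls are pairwise disjoint; summing the mass carried on the balls will then violate integrability.

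First I would fix a radius, say $R = 1$, in (\ref{cond}). Since $g(\xi_n) \geq \epsilon > 0$, the ratio appearing in (\ref{cond}) is well defined at $\xi_n$ for every $n$, and because $\|\xi_n\| \to \infty$ the condition supplies an index $N$ beyond which $\sup_{\|\eta\| < 1} \left| g(\xi_n + \eta)/g(\xi_n) - 1 \right| < 1/2$. This yields $g(\xi_n + \eta) > \tfrac{1}{2} g(\xi_n) \geq \epsilon/2$ for every $\|\eta\| < 1$, i.e.\ $g \geq \epsilon/2$ throughout the open ball $B(\xi_n, 1)$ for all $n \geq N$.

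Next I would extract a well-separated subsequence. Because $\|\xi_n\| \to \infty$, I can pass to a subsequence (relabelled $\{\xi_n\}$) along which the norms are strictly increasing with consecutive gaps exceeding $2$; by the reverse triangle inequality any two distinct terms then satisfy $\|\xi_n - \xi_m\| > 2$, so the unit balls $B(\xi_n, 1)$ are pairwise disjoint. Combining this with the lower bound from the previous step gives
\[
\int_{\mathbb{R}^{d+1}} g(\xi)\, d\xi \;\geq\; \sum_{n \geq N} \int_{B(\xi_n,1)} g(\xi)\, d\xi \;\geq\; \sum_{n \geq N} \frac{\epsilon}{2}\, \mathrm{vol}\big(B(0,1)\big) \;=\; \infty,
\]
contradicting the integrability of $g$. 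Hence $g(\xi) \to 0$.

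The delicate points are bookkeeping rather than conceptual: one must check that (\ref{cond}) may be invoked with a fixed finite radius (it holds for every finite $R$, so choosing $R = 1$ is legitimate), and that the norm-gap construction genuinely makes the balls disjoint. The main thing to get right is the order of quantifiers---the uniform-in-$\eta$ control inside each ball comes for free from the supremum already present in (\ref{cond}), so no additional uniformity needs to be established by hand.
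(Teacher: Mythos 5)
Your proof is correct and follows essentially the same argument as the paper's: a contradiction via a sequence where $g$ stays above $\epsilon$, the use of (\ref{cond}) with $R=1$ and tolerance $1/2$ to bound $g$ below by $\epsilon/2$ on unit balls, extraction of a $2$-separated subsequence so the balls are disjoint, and summation against integrability. The only difference is cosmetic (you invoke (\ref{cond}) before thinning the sequence, the paper thins first), and your reverse-triangle-inequality justification of the separation is if anything slightly more explicit than the paper's.
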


\begin{proof}
We prove this statement using contradiction.  Suppose $\lim \sup_{x \rightarrow \infty} g(\xi_0(x)) \neq 0$ for a certain path indexed by a parameter, $x$, $\xi_0(x):\mathbb{R}^+ \rightarrow \mathbb{R}^{d+1}$ such that $\| \xi_0(x) \| \rightarrow \infty$ as $x$ grows large.  Therefore, there exists some $\varepsilon > 0$ such that for each $i \in \{1,2, \hdots \}$, there exist some $x_i \geq i$ where $g(\xi_0(x_i)) > \varepsilon$.  Since $\| \xi_0(x_i)\| \rightarrow \infty$ as $i \rightarrow \infty$, we can select a subsequence, written $y_i$ for $i \in \{1, 2, \hdots \}$ such that $\| \xi_0(y_i) - \xi_0(y_j) \| > 2$ for $i \neq j$.

Now consider $\sum_{i=1}^\infty \int_{B(\xi_0(y_i),1)} g(\xi) d\xi$, where $B(\xi_0(y_i),1)$ indicates the ball of radius 1 centered at point $\xi_0(y_i)$.  Because the radius of these balls is less than or equal to half the distances between each center point, $\xi_0(y_i)$, these balls do not overlap.  Thus the sum of their volume will lower bound the full integral $\int_{\mathbb{R}^{d+1}} g(\xi) d\xi$.  To complete this proof by contradiction, we show $\sum_{i=1}^\infty \int_{B(\xi_0(y_i),1)} g(\xi) d\xi$ explodes by showing terms at sufficiently large values of $i$ will have a positive number as a lower bound.

We have $g(\xi_0(y_i)) > \varepsilon > 0$ for each $i$.  And we have $g(\xi)$ satisfies (\ref{cond}).  For a radius, $R$, set $R=1$, and pick $\delta = 1/2$.  Since $g(\xi)$ satisfies (\ref{cond}), it holds that there exists an $I \in \mathbb{R}$ such that for all $i > I$,
\[ \frac{1}{2} < \frac{g(\xi_0(y_i) + U)}{g(\xi_0(y_i))} < \frac{3}{2} \]
for all points such that $\|U\| < 1$.  Since $g(\xi_0(y_i)) > \varepsilon$, we find a lower bound for all points in the unit ball centered at $\xi_0(y_i)$, $\inf_{\xi \in B(\xi_0(y_i),1)} g(\xi) \geq \varepsilon/2$.
Therefore, $\int_{B(\xi_0(y_i),1)} g(\xi) d\xi \geq \varepsilon\pi^{(d+1)/2}/((d+1) \Gamma((d+1)/2)$ for each $i > I$.  Thus $\int_{\mathbb{R}^{d+1}} g(\xi) d\xi$ diverges, but this contradicts the integrability assumption. 

\end{proof}

\begin{lemma}
Let $g(\xi)$, $\xi \in \mathbb{R}^{d+1}$, be a strictly positive, integrable function.  Let $\eta(\cdot)$, $\xi_0(\cdot)$ and $U_0(\cdot)$ be a triplet, $(\eta(\cdot),\xi_0(\cdot), U_0(\cdot))$, where $\eta:\mathbb{R}^{d+1} \rightarrow \mathbb{R}^{d+1}$, $\xi_0:\mathbb{R} \rightarrow \mathbb{R}^{d+1}$ and $U_0: \mathbb{R} \rightarrow \mathbb{R}^{d+1}$ such that $(\eta(\cdot),\xi_0(\cdot), U_0(\cdot))$ has the following properties:
\begin{itemize}
\item[(i)] $\|\xi_0(x)\| \rightarrow \infty$ as $x \rightarrow \infty$.
\item[(ii)] $\| \eta(\xi_0(x))\|$ is bounded.
\item[(iii)] For any $C > 0$ and for all $x \in \mathbb{R}^+$, there exists a $Y \in \mathbb{R}$ dependent only on $C$ such that for all $y > Y$,
\[\| \eta(\xi_0(x)+U_0(y) )\| > C. \]
\end{itemize}
If $g(\xi)$ satisfies (\ref{cond}), then the function $f(\xi) = g(\eta(\xi))$ cannot satisfy (\ref{cond}).
\end{lemma}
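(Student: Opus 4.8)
The plan is to disprove condition (\ref{cond}) for $f = g \circ \eta$ directly, by producing a single finite radius $R$ together with a path to infinity along which the relative-difference supremum stays bounded away from zero. Since (\ref{cond}) is required to hold for \emph{every} finite $R$, it suffices to exhibit one $R$ for which it fails, and this freedom in choosing $R$ is exactly what makes the construction go through.

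First I would record two preliminary facts. Since $g$ is non-negative, integrable, and satisfies (\ref{cond}), the previous lemma (Lemma \ref{Lemma1}) gives $g(\zeta) \to 0$ as $\|\zeta\| \to \infty$; in particular, for every $\varepsilon > 0$ there is a $C$ with $g(\zeta) < \varepsilon$ whenever $\|\zeta\| > C$. Second, property (ii) confines $\eta(\xi_0(x))$ to a bounded set $\overline{B(0,M)}$; because $g$ is strictly positive (and, being a spectral density, continuous), it attains a positive minimum $m > 0$ on this compact set, so $f(\xi_0(x)) = g(\eta(\xi_0(x))) \geq m$ for all $x$.

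The path I would use is $\xi_0(x)$ as $x \to \infty$, which is a legitimate path to infinity by property (i). Applying the first fact with $\varepsilon = m/2$, fix $C$ so that $g(\zeta) < m/2$ whenever $\|\zeta\| > C$. Property (iii) now supplies, \emph{uniformly in} $x$, a threshold $Y(C)$; choosing any fixed $y_0 > Y(C)$ yields a single perturbation vector $U^\ast := U_0(y_0)$ with $\|\eta(\xi_0(x) + U^\ast)\| > C$ for all $x$. Set $R := \|U^\ast\| + 1$, so that $U^\ast$ is an admissible perturbation. Then for every $x$,
\[
\frac{f(\xi_0(x) + U^\ast)}{f(\xi_0(x))} = \frac{g(\eta(\xi_0(x) + U^\ast))}{g(\eta(\xi_0(x)))} < \frac{m/2}{m} = \frac12,
\]
whence $\sup_{\|U\| < R} | f(\xi_0(x) + U)/f(\xi_0(x)) - 1 | \geq | f(\xi_0(x) + U^\ast)/f(\xi_0(x)) - 1 | > 1/2$. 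As $x \to \infty$ the supremum in (\ref{cond}) for $f$ therefore does not converge to $0$, so $f$ cannot satisfy (\ref{cond}).

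The main obstacle, and the step deserving the most care, is the uniform positive lower bound $f(\xi_0(x)) \geq m$: it is what prevents the denominator from collapsing while the numerator is driven to zero, and it relies on $\eta(\xi_0(x))$ remaining in a fixed compact set together with $g$ being bounded away from zero there (strict positivity plus continuity; if continuity is dropped one must instead pass to a subsequence along which $\eta(\xi_0(x))$ converges). The other delicate point is the interplay between the ``for all $x$'' uniformity in property (iii) and the requirement that the perturbation be a \emph{single} vector of bounded norm: it is precisely because $Y(C)$ depends only on $C$ that one fixed $U^\ast$ simultaneously forces $\eta(\xi_0(x) + U^\ast)$ beyond radius $C$ for all $x$, allowing $R$ to be chosen once and for all.
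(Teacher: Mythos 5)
Your proposal is correct and follows essentially the same route as the paper's own proof: both use Lemma \ref{Lemma1} to force $g(\eta(\xi_0(x)+U_0(y)))$ below a fixed threshold uniformly in $x$ via property (iii), bound the denominator $g(\eta(\xi_0(x)))$ away from zero using property (ii) together with strict positivity, and conclude the supremum along the path $\xi_0(x)$ stays above $1/2$. Your treatment of the denominator bound is in fact slightly more careful than the paper's (which asserts the uniform lower bound from strict positivity and boundedness alone, a step that implicitly needs the compactness-plus-continuity argument you spell out), so there is nothing to correct.
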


\begin{proof}
Loosely, the properties of the triplet, $(\eta(\cdot),\xi_0(\cdot),U_0(\cdot))$, ensure that we can pick a point, $y$, large enough such that $\eta(\xi(x) + U_0(y))$ is sufficiently far from $\eta(\xi_0(x))$ for all $x$.  This, together with Lemma \ref{Lemma1}, prevents (\ref{cond}) from being satisfied for $f(\xi)$.

More formally, consider the supremum in condition (\ref{cond}) for $f(\xi_0(x))$ and for any $R \in \mathbb{R}$.
\[ \sup_{\|U\| < R} \left| \frac{f(\xi_0(x) + U)}{f(\xi_0(x))} - 1 \right |  \geq  \left| \frac{f(\xi_0(x) + U_0(y))}{f(\xi_0(x))} - 1 \right |  \]
for any $y$ such that $\| U_0(y) \| < R$.  Since $R$ can be arbitrarily large, we know this condition can be satisfied.  Substituting in $g$ on the right hand side, we find
\[ \sup_{\|U\| < R} \left| \frac{f(\xi_0(x) + U)}{f(\xi_0(x))} - 1 \right |  \geq  \left| \frac{g(\eta(\xi_0(x) + U_0(y)))}{g(\eta(\xi_0(x)))} - 1 \right |.  \]
We know $\| \eta(\xi_0(x)) \|$ is bounded and that $g$ is strictly positive; hence, there exists some $\varepsilon > 0$ such that $g( \eta(\xi_0(x))) > \varepsilon$ for all $x$.

By Lemma \ref{Lemma1} and property (iii) we also know $y$ (and thus $R$) can be chosen large enough such that $g(\eta(\xi_0(x) + U_0(y))) < \varepsilon/2$ for all $x$. The limit of $\sup_{\|U\| < R} \left| \frac{f(\xi_0(x) + U)}{f(\xi_0(x))} - 1 \right |$  as $x\rightarrow\infty$ must therefore be bounded below by 1/2 and thus cannot equal 0. Since $\| \xi_0(x) \| \rightarrow \infty$ as $x \rightarrow \infty$, we have found a path, $\xi_0(x)$, that violates condition (\ref{cond}) for $f(\xi)$. 
\end{proof}

\begin{ths2}
Let $g(\lambda,\omega)$ be the full spectrum of a space-time covariance function where $\lambda$ is the spatial wavenumber and $\omega$ is the temporal frequency. Let $\theta(\cdot)$ be an odd function, and let $\phi$ be a unit vector in $\mathbb{R}^d$. Let also $g(\lambda,\omega)$ be strictly positive, and let $g(\lambda,\omega)$ satisfy (\ref{cond}). If $g(\lambda - \theta(\omega) \phi,\omega)$ satisfies (\ref{cond}), then $\theta(\omega)$ must be locally bounded.
\end{ths2}

\begin{proof} We prove this statement using contradiction.  Assume $\theta(\omega)$ is not locally bounded.  The statement then follows almost directly from Lemma 2. To meet the conditions of Lemma 2, we need to define a triplet, $(\eta(\cdot),\xi_0(\cdot), U_0(\cdot))$ that has the properties listed in Lemma 2.  

First we define $\xi_0(x)$. Since $\theta(\omega)$ is not locally bounded, there exists a sequence, \newline $\omega_1, \omega_2, \hdots$, and a bound, $A$, such that $|\omega_i| < A$ for all $i$ and $\| \theta(\omega)\| \rightarrow \infty$ as $i \rightarrow \infty$.  Define $\omega_0(x) = \omega_i$ if $i-1 < x \leq i$ and $\omega_0 = 0$ otherwise.  We have $|\omega_0(x)| < A$ for all $x \in \mathbb{R}$, and we have that $\lim_{x\rightarrow\infty} \|\theta(\omega_0(x))\| \rightarrow \infty$.  Define $\xi_0(x) = (\theta(\omega_0(x))\phi,\omega_0(x))$.  Thus $\|\xi_0(x)\| \rightarrow \infty$ as $x \rightarrow \infty$.

Next, define $\eta(\lambda,\omega) = (\lambda - \theta(\omega)\phi ,\omega)$.  Consider $\eta(\xi_0(x)) = (  \theta(\omega_0(x))\phi - \theta(\omega_0(x))\phi, \omega_0(x)) = (0,\omega_0(x))$.  We know $| \omega_0(x) | < A$; therefore, $\|\eta(\xi_0(x)) \|$ is bounded.

Finally we define $U_0(y) =  (y \phi,0)$.  We prove the triplet has property (iii) in Lemma 2. Let $C > 0$. Consider $\eta(\xi_0(x)+U_0(y) ) = (U_0(y),\omega_0(x)) = (y \phi, \omega_0(x))$.  Therefore, $\| \eta(\xi_0(x)+U_0(y) ) \| \geq \| y \phi \| = y$.  Let $Y = C$.  If $y > Y$, it follows that $\| \eta(\xi_0(x)+U_0(y) ) \| > C$ for all $x > 0$.

By assumption, $g(\xi) = g(\lambda,\omega)$ satisfies (\ref{cond}), and above, we established the triplet, $(\eta(\cdot),\xi_0(\cdot), U_0(\cdot))$, has the properties in Lemma 2.  Therefore, by Lemma 2, we know $g(\eta(\xi)) = g(\lambda - \theta(\omega)\phi ,\omega)$ cannot satisfy (\ref{cond}). By contradiction, $\theta(\omega)$ must be locally bounded.
\end{proof}

\subsection{Half-Spectrum for Example 3}

Writing down the half-spectrum in Example 3 requires expressions for the quantile function (inverse distribution function) of $|A| \sim1/\sqrt{2 \chi^2_{2\nu}}$ and the distribution function of the density $f_B \propto (\beta^2 + \omega^2)^{-(\kappa + 1/2)}$. Write $P(s,x) = \gamma(s,x)/\Gamma(s)$ as the regularized gamma function, where $\gamma$ is the lower-incomplete gamma function \citep{nist}.  The distribution function $F_{\chi^2_{2\nu}}(x;\nu)$ is given by $P(\nu, x/2)$.  Hence $F_{\chi^2_{2\nu}}^{-1}(x;\nu) = 2 P^{-1}(\nu,x)$, and therefore, $F_{|A|}^{-1} = 1/\sqrt{2 F_{\chi^2_{2\nu}}^{-1}(x;\nu)}$. The inverse distribution function $F_A^{-1}$ can be found directly as
\[ F_A^{-1}(x;\nu) = \left\lbrace \begin{array}{c c} -  1/\sqrt{4  P^{-1}(\nu,1-2x)}; & x< 1/2 \\
 1/\sqrt{4  P^{-1}(\nu,2x-1)}; & x \geq 1/2
\end{array}
\right. .
\]

To find $F_B$ it is convenient to view $f_B$ as a special case of the hypergeometric function, $_2F_1$ \citep{nist}. We write $f_B \propto \beta^{-(2\kappa+1)}~_2F_1(1, \kappa + 1/2;1;-\omega^2/\beta^2)$ \citep{nist}.  For $|\omega/\beta| < 1$, indefinite integration can be carried out directly using the series representation of the hypergeometric function to find $\int f_B d\omega \propto \omega/\beta^{2k+1} ~_2F_1(1/2,\kappa + 1/2; 3/2; -\omega^2/\beta^2)$ plus some constant.  Hence by analytic continuation, 
\[ F_B(\omega;\beta,\kappa) = \frac{1}{2} + \frac{\Gamma(\kappa+1/2)}{\Gamma(\kappa) \sqrt{\pi}} \frac{\omega}{\beta} ~_2F_1\left(\frac{1}{2},\frac{1}{2} + \kappa;\frac{3}{2}; -\frac{\omega^2}{\beta^2} \right)
\]
can be defined for the values of $\omega$ outside this range \citep{nist}.  The constant in $f_B$ and $F_B$ can be obtained in \cite[3.241.4]{gradsh}.

Define $\delta(\omega;\beta;\nu;\kappa) = |F_A^{-1}(F_B(\omega;\beta,\kappa);\nu)|$.  With $\mathbb{C}(s;\alpha) = \phi \exp(-\alpha^2 s^2)$, we find the half spectral form of Example 3,

\[ f(\omega)\mathbb{C}(s \delta(\omega)) = \phi (\beta^2 + \omega^2)^{-(\kappa + 1/2)} \exp(-\alpha^2 s^2\delta^2(\omega;\beta;\nu;\kappa)). \]

\subsection{Half-Spectrum of $K_{f_G}$}

To find the half-spectrum of $K_{f_G}$, we need only find the spectrum of the temporal covariance function $G(0,t)|_{\kappa = 1} = \phi (\beta |u| + 1)^{-1}$ written without the spatial nugget in (\ref{gmodel}) and equivalently in (\ref{spacenug}). \cite[p. 8]{bateman} gives the calculation as a cosine transform,
\[ f_G(\omega) = \phi/\beta \left[ \pi \sin\left(\frac{|\omega|}{\beta}\right) - 2 ~\mbox{Si}\left(\frac{|\omega|}{\beta} \right) \sin\left(\frac{|\omega|}{\beta}\right)  - 2~ \mbox{Ci}\left(\frac{|\omega|}{\beta} \right) \cos\left(\frac{|\omega|}{\beta}\right)  \right], \]
where $\mbox{Si}$ and $\mbox{Ci}$ are the Sine and Cosine integrals defined in \cite{nist}.  The tail of this spectrum decays asymptotically proportional to $\omega^{-2}$ \citep{nist}.  With $f_G$, the half-spectrum of $K_{f_G}$ is obtained by plugging $f_G$ into (\ref{HSFORM1}). 

\subsection{Fitting Procedures}

R is used for fitting of all models using the general optimizer \texttt{nlm} with default optimization parameters.  Numerical hessians are calculated in this implementation of \texttt{nlm}.  In the previous two sections of this Appendix, we showed how half-spectra can be calculated provided functions $_2F_1$, Sine and Cosine integrals and the regularized gamma function, $P$, can be calculated.  For $P$, we used the $\chi^2_{2\nu}$ quantile function included with $R$.  We use the \texttt{gsl} package for the Sine and Cosine integral functions \citep{gsl}, and we used the \texttt{hypergeo} package for $_2F_1$ \citep{hankinlee}.  The Block-Toeplitz solver for space-time fitting was adapted from a MATLAB function by \cite{blocklevinson}.


\small 
\bibliographystyle{agsm}
\bibliography{bibsave2}

\end{document}